\newtheorem{prop}{Proposition}
\newtheorem{remark}{Remark}
\newtheorem{lemma}{Lemma}
\newtheorem{theorem}{Theorem}
\renewcommand{\Im}{\operatorname{Im}}
\newcommand{\sm}{\left(\begin{smallmatrix}}
\newcommand{\esm}{\end{smallmatrix}\right)}
\newcommand{\la}{\lambda}
\newcommand{\At}{\widetilde{A} }
\newcommand{\cN}{\mathcal{N}}
\newcommand{\cL}{\mathcal{L}}
\newcommand{\cO}{{\mathcal{O}}}
\newcommand{\eps}{\varepsilon}
\newcommand{\C}{\ensuremath{\mathbb{C}}}
\newcommand{\R}{\ensuremath{\mathbb{R}}}
\newcommand{\N}{\ensuremath{\mathbb{N}}}
\newcommand{\Z}{\ensuremath{\mathbb{Z}}}
\newcommand{\norm}[1]{\left\Vert#1\right\Vert}
\newcommand{\be}{\begin{equation*}}
\newcommand{\ee}{\end{equation*}}
\newcommand{\bea}{\begin{eqnarray*}}
\newcommand{\eea}{\end{eqnarray*}}
\newcommand{\ben}{\begin{eqnarray}}
\newcommand{\een}{\end{eqnarray}}
\newcommand{\beq}{\begin{equation}}
\newcommand{\eeq}{\end{equation}}
\newcommand{\enq}{\end{equation}}
\def\llangle{\left\langle}\def\rrangle{\right\rangle}
\title{On the spectrum of waveguides in planar photonic bandgap structures}
\author{B.M.Brown}
\address{Cardiff School of Computer Science,
Cardiff University, Cardiff, CF24 3AA, Wales, UK}
\email{Malcolm.Brown@cs.cardiff.ac.uk}
\author{V.Hoang}
\address{Institute for Analysis/Research Training Group 1294,
Karlsruhe Institute of Technology (KIT), Kaiserstrasse 89, Karlsruhe,
Germany}\email{ duy.hoang@kit.edu}
\author{M.Plum} \address{Institute for Analysis,
Karlsruhe Institute of Technology (KIT), Kaiserstrasse 89, Karlsruhe,
Germany}\email{michael.plum@kit.edu}
\author{I.Wood} \address{School of Mathematics, Statistics and Actuarial Sciences,
 University of Kent, Canterbury, CT2 7NF, UK}\email{ i.wood@kent.ac.uk}
\begin{document}

\begin{abstract}
We study a Helmholtz-type spectral problem related to the propagation
of electromagnetic waves in photonic crystal waveguides. The
waveguide is created by introducing a linear defect into a
two-dimensional periodic medium. The defect is infinitely extended
and aligned with one of the coordinate axes. The perturbation is
expected to introduce guided mode spectrum inside the band gaps of
the fully periodic, unperturbed spectral problem.
In the first part of the paper, we prove that, somewhat unexpectedly,
guided mode spectrum can be created by arbitrarily ``small'' perturbations.
Secondly we show that, after performing a Floquet decomposition in the
axial direction of the waveguide, for any fixed value of the quasi-momentum
$k_x$ the perturbation generates at most finitely many new eigenvalues inside the gap.
\end{abstract}

\maketitle

\section{Introduction}

The concept of a photonic crystal (also called a photonic band-gap
material) was suggested in 1987 (see e.g. \cite{Joann} for a
textbook introduction) and has received significant attention from both
the theoretical and experimental viewpoint.
In practice, photonic crystals are often manufactured using periodic
crystalline structures, a feature of which is their ability to allow,
or deny the propagation of electromagnetic radiation
which lies in a well defined range of the frequency spectrum.
 We call ranges where the electromagnetic radiation can
 propagate  spectral bands  and ranges where propagation is prevented  spectral gaps.
 These concepts will be made precise in Section \ref{formulation}.

One possible application of photonic crystals is their use for manufacturing highly
efficient optical waveguides which allow propagation of electromagnetic radiation only of very focussed frequencies.
 Briefly, these are created  by taking  some   photonic band-gap material,  called the bulk,
 and introducing a linear defect   which breaches the periodicity.
 This  may have the effect of  allowing propagation of electromagnetic radiation
 in a range of the frequency spectrum in which propagation is not possible in the bulk. Waves of these frequencies, sometimes called guided modes,  should then be highly focussed and almost entirely confined to the defect.

An  appropriate mathematical model for such materials  is given  by the spectral problem for
the Maxwell equations, or in the context of polarised waves in two
dimensions, by the Helmholtz equation.   In this paper, we shall study the
spectral problem for
$$
- \eps^{-1} \Delta,
$$
in $\R^2$, where $\eps$ is the dielectric function (or, equivalently,
the square of the refractive index of the material). We wish to model a waveguide in a  periodic  crystal and to do this  we take
$$\eps=\eps_0+\eps_1$$
where  $\eps_0$ is periodic and $\eps_1$ is a perturbation supported only in the waveguide,
which we choose as a strip in the $x$-direction.

In order to establish the existence of guided modes in a band gap of
the unperturbed medium (modelled by $- \eps_0^{-1}\Delta$), we must
prove that this gap  of the unperturbed problem now   contains
spectrum of $- \eps^{-1} \Delta$ induced by the perturbation
$\eps_1$. The mathematical investigation of these matters was begun
in \cite{AS, BHPW, KuchWav1, MiaoMa} and \cite{KuchWav2}, the latter
two articles working with the full 3D Maxwell equations. In
particular, \cite{KuchWav1, KuchWav2, MiaoMa} give sufficient
conditions for the existence of guided modes in the spectral gap of
the unperturbed problem, for example: sufficiently wide gaps always
contain spectrum of the perturbed problem. We note that existence of
gaps in the spectrum of some problems with periodic background media
was proved in \cite{FK96,CD99} and in \cite{Fil08} for the full
Maxwell case.

In this paper, we take another point of view and ask the following question: first  fix  a certain gap,
no matter how wide, is there a certain threshold strength of the
perturbation needed to produce spectrum in the gap? It turns out that
spectrum appears in the gap of the unperturbed problem  under  arbitrarily small perturbations.
We then further proceed to study the structure of the induced spectrum
for fixed quasimomentum $k_x$ in the direction
of the guided wave propagation. In particular, we show that for fixed $k_x$,
the eigenvalues introduced into the gap by the perturbation do not accumulate
at the ends of the gap interval.

It is interesting to compare our results to those available in the
literature. In \cite{FK97,FigKlein} the authors show generation of a
finite number of eigenvalues in the gaps by localized perturbations
of the periodic medium.   Another of their results in dimension
$d=3$ states that perturbations with $\eps_1 < 0$ (positive defects
in their terminology) need a certain threshold strength to create
eigenvalues in gaps. This is in sharp contrast to our result, which
uses specific features of the dimension two nature of the problem;
moreover, our defects are not localized. In \cite[Corollary 1]{AS},
a result concerning the existence of a threshold for a
two-dimensional waveguide analogous to that in \cite{FigKlein} is
claimed. This seems to be at odds with the result of Theorem
\ref{defectmodes} presented here.

Our paper is structured as follows: in section \ref{formulation}, we introduce the operators
to be studied and remind the reader briefly of the Floquet transform. Section \ref{bandedges}
contains some preparatory material on the band functions and Bloch functions,
which will play an important role later on. In section \ref{genofspec}, we use variational
arguments to prove the existence of guided mode spectrum. Finally, in section \ref{nonaccum},
we consider the question of non-accumulation of eigenvalues at the ends of the gap.



\section{Formulation of the problem}
\label{formulation}

We first consider the fully periodic background problem in $\R^2$. Let $\eps_0$ be a
positive bounded function on $\R^2$ which is bounded away from zero and which is periodic in both $x$ and $y$. For simplicity, we will assume that the basic cell of periodicity is $[0,1]^2$,
i.e.~that $\eps_0(x+1,y)=\eps_0(x,y)=\eps_0(x,y+1)$ for all $(x,y)\in\R^2$.
For any open set $O\subseteq\R^2$  we denote by $L^2_{\eps_0}(O)$ the weighted $L^2$-space with norm given by
$$\norm{u}_{\eps_0}^2=\int_{O}\eps_0(x)|u(x)|^2\ dx.$$

A 2D-periodic crystalline structure may be modeled by an
operator $L_0$ in $L^2_{\eps_0}(\R^2)$ given formally by
$$ L_0u=-\frac{1}{\eps_0(x,y)}\Delta u\quad \hbox{ with domain }
D(L_0)= H^2(\R^2).$$
This is a self-adjoint operator. A standard tool to analyse periodic
problems is the \emph{Floquet-Bloch} transform. We will state here
some of the results used in this paper and refer the reader to
\cite{KuchmentBook, RS} for proofs and more background on the
theory. The Floquet-Bloch transform $U_x$ associated with the
periodicity in the $x$-direction  is
\beq \label{Floquet} U_x :
L^2_{\eps_0}(\R^2) \to L^2_{\eps_0}(\Omega \times (-\pi,\pi)), \quad
U_x f(x, y, k) = (2\pi)^{-1/2} \sum_{m\in \Z} e^{i k m} f(x-m, y)
\enq where $\Omega:=(0,1)\times\R$ (more precisely, $U_x$ is defined
by the formula \eqref{Floquet} first for functions $f$ with
compact support, then extended to all of $L^2(\R^2)$). 

We now consider an operator family $L_0(k_x)$ on the strip $\Omega$ parametrised by $k_x\in [-\pi,\pi]$.
$L_0(k_x)$ is the self-adjoint operator in $L^2_{\eps_0}(\Omega)$ given by
\beq L_0(k_x)u=-\frac{1}{\eps_0(x,y)}\Delta u \eeq
defined on the space of all functions  $u\in H^2(\Omega)$ which satisfy  the
quasi-periodic boundary conditions
\beq \label{quasiperiodic}
u(1,y)=e^{i k_x} u(0,y)\quad\hbox{ and }\quad
\frac{\partial u}{\partial x}(1,y)=e^{i k_x} \frac{\partial u}{\partial x}(0,y)  .
\eeq
It follows from the general theory \cite{KuchmentBook, RS} that $L_0$ is the direct
integral of the operators $L_0(k_x)$:
\beq L_0 =
\int_{[-\pi,\pi]}^\bigoplus L_0(k_x) d k_x.
\eeq
As a consequence,
the spectrum of the problem in the plane is
\beq
\sigma(L_0)=\overline{\bigcup_{k_x\in [-\pi,\pi]}\sigma(L_0(k_x))}. \label{sigmat}
\eeq

Moreover, in view of periodicity in the $y$-direction, similar arguments apply
for each operator $L_0(k_x)$ and the spectrum of the operator $L_0(k_x)$ is
$$ \sigma(L_0(k_x)) =\overline{\bigcup_{k\in [-\pi,\pi]} \sigma(L_0(k_x, k))}$$
where $L_0(k_x, k)$ is the operator $-\frac{1}{\eps_0(x,y)}\Delta $ on the
unit cell $(0,1)^2$ with quasi-periodic boundary conditions in both the $x$- and $y$-directions:
\beq
\left\{\begin{array}{l}u(1,y)=e^{i k_x} u(0,y),\ \frac{\partial u}{\partial x}(1,y)=e^{i k_x} \frac{\partial u}{\partial x}(0,y),
\vspace{0.2cm}\\
u(x,1)=e^{i k} u(x,0),\ \frac{\partial u}{\partial y}(x,1)=e^{i k} \frac{\partial u}{\partial y}(x,0).
\end{array}\right.\label{both quasi-periodic}
\eeq
For notational simplicity we shall refer to the additional parameter as $k$ rather than $k_y$.

From \eqref{sigmat}, it is clear that any gap in the spectrum of  $L_0$ must arise from gaps in the spectrum of all  the operators $L_0(k_x)$. In this paper, we will fix $k_x$ and assume that there are gaps in the spectrum of  $L_0(k_x)$, an arbitrary one of which we denote by $(\mu_0,\mu_1)$. We will assume throughout that $\mu_0>0$.

We now turn to the problem which is our main interest in this paper. We perturb the original periodic problem in the plane to consider a wave-guide $W=\R\times (0,1)$ inside the 2D-periodic crystalline structure.
This new problem is modeled by an operator $L$ acting on $L^2_\eps(\R^2)$ given by
$$ Lu=-\frac{1}{\eps(x,y)}\Delta u$$ with $\eps(x,y)= \eps_0(x,y)+\eps_1(x,y)$.
The perturbation $\eps_1$ is supported in $W$, in the $x$-direction it is periodic with period 1 and it
is such that $\inf \eps_0+\eps_1>0$.

In view of the periodicity in the $x$-direction we can use the Floquet-Bloch transform \eqref{Floquet} to generate, as in the case of $L_0$, a selfadjoint operator family $L(k_x)$
 on the strip $\Omega$, acting in the space $L^2_\eps(\Omega)$,  parametrized by $k_x\in [-\pi,\pi] $ and given by
\be
L(k_x)u:= -\frac{1}{\eps_0+\eps_1}\Delta u
\ee
subject to the quasi-periodic boundary conditions \eqref{quasiperiodic}.
As before, the spectrum of the waveguide problem $L$ is
\beq\label{specL}
\sigma(L)=\overline{\bigcup_{k_x\in [-\pi,\pi]}\sigma(L(k_x))}.
\eeq
The chief goal of this paper is to compare the spectra of $L_0$ and $L$. In view of  \eqref{sigmat} and \eqref{specL},
this amounts to comparing the spectra of $L_0(k_x)$ and $L(k_x)$.
In \cite[Lemma 10]{AS} it is shown that, $\sigma(L(k_x))$ can differ from $\sigma(L_0(k_x))$
only through the introduction of extra eigenvalues.
Moreover, the essential spectra of the two operators coincide. In particular, the eigenvalues of $L(k_x)$ cannot accumulate at any point inside the spectral gaps of $L_0(k_x)$.

We shall therefore study the eigenvalues of the perturbed problem
\beq\label{eq:pert} -\Delta u = \lambda (\eps_0 + \eps_1) u\quad\hbox{ on } \Omega \eeq
where $\lambda\in (\mu_0,\mu_1)$, i.e.~$\la$ lies in a gap of the spectrum of the operator $L_0(k_x)$. It is understood that from now on all functions satisfy the
quasi-periodic boundary conditions \eqref{quasiperiodic}.
From the equation \eqref{eq:pert} we get
$$-\frac{1}{\eps_0} \Delta u -\lambda u = \lambda \frac{\eps_1}{\eps_0} u.$$
It then follows that $\lambda$ is an eigenvalue in the gap iff 
\beq u=\lambda \left(L_0(k_x) -\lambda\right)^{-1} \left( \frac{\eps_1}{\eps_0}u\right) \label{FP} \eeq
holds
for some non-zero $u$. Our strategy will be based on finding solutions of \eqref{FP} using information on the resolvent of the unperturbed operator $L_0(k_x)$.

As mentioned before, we shall show that small perturbations of
the operator $L_0(k_x)$ create extra spectrum in an arbitrary fixed gap.
Our second aim in this paper is to prove that
the additional eigenvalues do not accumulate anywhere on the closure
of the spectral gap $(\mu_0,\mu_1)$, in particular not at the
endpoints $\mu_0$ and $\mu_1$.

We consider the non-accumulation result as a first step towards a
better understanding of the structure of the guided mode spectrum.
One could surmise that the guided mode spectrum $\sigma(L(k_x))\setminus\sigma(L_0(k_x))$
can be written in terms of continuous band functions $\{\theta_j(k_x)\}$ depending on $k_x\in [-\pi,\pi]$.
However, since eigenvalues may be emitted and absorbed by the essential spectrum of the
unperturbed operator $L_0(k_x)$ as $k_x$ varies, the functions $\theta_j$ may possibly be defined
only on subintervals of $[-\pi, \pi]$. An open question is whether a finite total number
of band functions $\theta_j$ (possibly defined on subintervals of $[-\pi,\pi]$) is sufficient to describe
the guided mode spectrum.

\section{Eigenvalues at band edges and their eigenfunctions}
\label{bandedges}

Our analysis of the waveguide problem  will be via a study of the resolvent of the unperturbed operator $L_0(k_x)$. This will be performed using the Floquet-Bloch transform. In this section we introduce a representation of the resolvent in terms of so-called Bloch functions and show some results which will be crucial in our later analysis. We note that the perturbation plays no role in this section.

For fixed $k_x$ we consider the operator 
$L_0(k_x, k)$ on $(0,1)^2$ introduced in the previous section. The operator depends on $k$ via the quasi-periodic boundary conditions and thus has a $k$-dependent domain. We can transform
the eigenvalue problem for $L_0(k_x, k)$ into an eigenvalue problem for a $k$-dependent operator with periodic
boundary conditions in the $y$-direction:
Let $\nabla_k=\nabla+i\left(\begin{array}{c}0\\k\end{array}\right)$
and define $\Delta_k:=\nabla_k^2$ on $(0,1)^2$ subject to the boundary conditions
\beq
\left\{\begin{array}{l}
u(1,y)=e^{i k_x} u(0,y),\ \frac{\partial u}{\partial x}(1,y)=e^{i k_x} \frac{\partial u}{\partial x}(0,y),
\vspace{0.2cm}\\
u(x,1)= u(x,0),\ \frac{\partial u}{\partial y}(x,1)= \frac{\partial u}{\partial y}(x,0).
\end{array}\right.\label{mixed}
\eeq

Then the self-adjoint analytic operator family $-\frac{1}{\eps_0}\Delta_k$ in $L^2_ {\eps_0}((0,1)^2)$
is of type (A) (cf. \cite{Kato}), i.e. the domain of the operators does not vary
when $k$ varies.
Further, it is shown in  \cite[Theorem VII.3.9]{Kato} that there exist collections of functions $\{\lambda_s(k)\}_{s\in \N}$ and $\{\phi_s(k)\}_{s\in \N}$
which are real-analytic functions in the variable $k$ on $[-\pi,\pi]$. Moreover,
for each $s\in \N$, $\lambda_s$ and $\phi_s$
can be continued analytically to an open set
$$\{ z \in \C: \operatorname{Re}\ z \in (-\pi-\delta_s, \pi+\delta_s),\ |\Im\ z | < \eta_s \}$$
containing the interval $[-\pi,\pi]$. For every $k\in [-\pi,\pi]$, the function
$\phi_s(k)\in L^2_{\eps_0}((0,1)^2)$ is
 a normalized eigenfunction of $-\frac{1}{\eps_0}\Delta_k$ satisfying \eqref{mixed} and $\lambda_s(k)$
is the corresponding eigenvalue. We note that the eigenvalues are not necessarily ordered
by magnitude. Moreover, the analyticity results depend critically on
the fact that we only let the scalar parameter $k$ vary.
The normalized eigenfunctions $\{\psi_s(x,y,k)\}_{s\in \N}$ of $L_0(k_x, k)$ are then given by
$\psi_s(x,y,k)=e^{iky}\phi_s(x,y,k)$, with eigenvalues $\lambda_s(k)$. We call these the Bloch functions.

\begin{lemma}
  Let $\norm{u}_2=\left(\int_{(0,1)^2}|u|^2\right)^{1/2}$ denote the unweighted $L^2$-norm of a function over $(0,1)^2$. We have the following gradient estimates:
  \beq 
 \norm{ \nabla \phi_s}_2 \leq C(\sqrt{ \lambda_s(k)}+1) \quad\hbox{ and } \quad  \norm{ \nabla \psi_s}_2 \leq \sqrt{ \lambda_s(k)}. \label{psi_s_estimate}
 \eeq
\end{lemma}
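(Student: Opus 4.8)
The plan is to obtain both bounds directly from the eigenvalue equation and the normalisation, so the proof will be short. Throughout I use that $\psi_s(\cdot,\cdot,k)$ is a normalised eigenfunction of $L_0(k_x,k)$ in $L^2_{\eps_0}((0,1)^2)$, i.e.\ $-\Delta\psi_s=\lambda_s(k)\,\eps_0\psi_s$ on $(0,1)^2$ with $\int_{(0,1)^2}\eps_0|\psi_s|^2=1$ and with the quasi-periodic boundary conditions \eqref{both quasi-periodic}; since $|\phi_s|=|\psi_s|$ pointwise, also $\int_{(0,1)^2}\eps_0|\phi_s|^2=1$.

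I would first treat $\psi_s$, where in fact equality holds. Multiplying $-\Delta\psi_s=\lambda_s(k)\eps_0\psi_s$ by $\overline{\psi_s}$, integrating over $(0,1)^2$ and integrating by parts gives $\norm{\nabla\psi_s}_2^2 = \lambda_s(k)\int_{(0,1)^2}\eps_0|\psi_s|^2 + B = \lambda_s(k) + B$, where $B$ collects the boundary terms. These split into an $\{x=1\}$--$\{x=0\}$ pair and a $\{y=1\}$--$\{y=0\}$ pair; for the first pair the relations $\psi_s(1,y)=e^{ik_x}\psi_s(0,y)$ and $\partial_x\psi_s(1,y)=e^{ik_x}\partial_x\psi_s(0,y)$ make the two integrands coincide, because $e^{ik_x}\,\overline{e^{ik_x}}=1$, so they cancel; the second pair cancels in the same way using the second line of \eqref{both quasi-periodic}. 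Hence $B=0$ and $\norm{\nabla\psi_s}_2=\sqrt{\lambda_s(k)}$ (note $\lambda_s(k)\ge0$, as $-\tfrac1{\eps_0}\Delta$ is non-negative), so in particular the claimed estimate $\norm{\nabla\psi_s}_2\le\sqrt{\lambda_s(k)}$ holds.

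For $\phi_s=e^{-iky}\psi_s$ I would use the product rule together with $|\nabla e^{-iky}|=|k|$ to get the pointwise bound $|\nabla\phi_s|\le|\nabla\psi_s|+|k|\,|\psi_s|$, whence $\norm{\nabla\phi_s}_2\le\norm{\nabla\psi_s}_2+|k|\,\norm{\psi_s}_2$ by Minkowski's inequality. Since $|k|\le\pi$ and $\eps_0$ is bounded away from zero, $\norm{\psi_s}_2^2=\int_{(0,1)^2}|\psi_s|^2\le(\inf\eps_0)^{-1}\int_{(0,1)^2}\eps_0|\psi_s|^2=(\inf\eps_0)^{-1}$, and combining with the previous step yields $\norm{\nabla\phi_s}_2\le\sqrt{\lambda_s(k)}+\pi(\inf\eps_0)^{-1/2}\le C\bigl(\sqrt{\lambda_s(k)}+1\bigr)$ with $C:=\max\{1,\pi(\inf\eps_0)^{-1/2}\}$, which depends only on $\eps_0$.

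I do not expect a genuine obstacle here; the only points that need a little care are the pairwise cancellation of the boundary terms in the integration by parts, and the bookkeeping that ``normalised'' refers to the weighted norm $\norm{\cdot}_{\eps_0}$ (with the unweighted normalisation the constant in the second estimate would become $\sqrt{\sup\eps_0}$ rather than $1$).
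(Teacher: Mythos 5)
Your proof is correct and follows essentially the same route as the paper: test the eigenvalue equation with the eigenfunction, integrate by parts using the quasi-periodic boundary conditions, and pass between $\phi_s$ and $\psi_s$ via the triangle inequality with the $|k|\,\norm{\phi_s}_2$ (respectively $|k|\,\norm{\psi_s}_2$) term controlled through $\inf\eps_0$. The only cosmetic difference is the order of the two bounds (the paper works with $\nabla_k\phi_s$ first, you with $\nabla\psi_s$ first), but since $|\nabla_k\phi_s|=|\nabla\psi_s|$ pointwise this is the same computation, and your explicit cancellation of the boundary terms and the remark on the weighted normalisation are exactly the points the paper uses implicitly.
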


\begin{proof}
 As $-\frac{1}{\eps_0}\Delta_k \phi_s =\lambda_s(k)\phi_s(k)$, testing with $\phi_s$, gives
 $$\int_{(0,1)^2} | \nabla_k\phi_s|^2=\int_{(0,1)^2}\left(-\frac{1}{\eps_0}\Delta_k\phi_s\right)\overline{\phi_s}\eps_0= \int_{(0,1)^2} \lambda_s(k)\phi_s\overline{\phi_s}\eps_0=\lambda_s(k).$$

 Hence,  \bea
 \norm{ \nabla \phi_s}_2=\norm{\nabla_k\phi_s-i\left(\begin{array}{c}0\\k\end{array}\right)\phi_s}_2 \leq \norm{\nabla_k\phi_s}_2 + |k| \norm{\phi_s}_2 \leq C(\sqrt{ \lambda_s(k)}+1).
 \eea
 Note that, as $k$ runs through a bounded set, the constant $C$ can be chosen independent of $k$, but that it depends on $\norm{1/\eps_0}_\infty$.
 The second statement follows from integration by parts:
 $$\int_{(0,1)^2} | \nabla\psi_s|^2=\int_{(0,1)^2}\left(-\frac{1}{\eps_0}\Delta\psi_s\right)\overline{\psi_s}\eps_0= \int_{(0,1)^2} \lambda_s(k)\psi_s\overline{\psi_s}\eps_0=\lambda_s(k).$$\end{proof}

We remind the reader that $\mu_1$ is the lowest point of a spectral band and lies at the top end of a gap.
The next result shows that only finitely many spectral bands can touch $\mu_1$ and that each
$\lambda_s(k)$ can touch $\mu_1$ only finitely many times.
\begin{prop}\label{finite}
There are only finitely many pairs $(s_i,k_i)$ such that
$\lambda_{s_i}(k_i)=\mu_1$.
\end{prop}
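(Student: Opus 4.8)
The plan is to split the assertion into two parts: (a) only finitely many indices $s$ occur (``finitely many bands touch $\mu_1$''), and (b) for each such $s$ the set $\{k:\lambda_s(k)=\mu_1\}$ is finite. Both rest on a preliminary observation: since $(\mu_0,\mu_1)$ is a gap of $L_0(k_x)=\int_{[-\pi,\pi]}^{\bigoplus}L_0(k_x,k)\,dk$, no value $\lambda_s(k)$ lies in $(\mu_0,\mu_1)$; as $\lambda_s$ is continuous on the interval $[-\pi,\pi]$, its range is an interval missing $(\mu_0,\mu_1)$, so $\lambda_s(k_0)=\mu_1$ forces $\mu_1=\min_{k\in[-\pi,\pi]}\lambda_s(k)$.

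For part (a) I would use the Lemma above, together with two uniform bounds. First, a uniform counting bound: there is $S_0\in\N$ with $\#\{s:\lambda_s(k)\le\mu_1+1\}\le S_0$ for all $k\in[-\pi,\pi]$. This follows from the min--max principle, since the Rayleigh quotient of $-\frac1{\eps_0}\Delta_k$ is $\int|\nabla_ku|^2/\int\eps_0|u|^2$, which differs by at most the factor $\norm{1/\eps_0}_\infty$ from the Rayleigh quotient of $-\Delta_k$ on $(0,1)^2$ with the essential boundary conditions of \eqref{mixed}; and the eigenvalues of the latter operator are $(k_x+2\pi m)^2+(k+2\pi n)^2$, $(m,n)\in\Z^2$, whose counting function is bounded uniformly for $k_x,k\in[-\pi,\pi]$. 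Second, a uniform Lipschitz bound near the edge: standard analytic perturbation theory (\cite{Kato}) gives $\lambda_s'(k)=\langle(\partial_kA_k)\phi_s,\phi_s\rangle_{\eps_0}$ with $A_k=-\frac1{\eps_0}\Delta_k$ and $\partial_kA_k=-\frac1{\eps_0}(2i\partial_y-2k)$; combined with the gradient estimate $\norm{\nabla\phi_s}_2\le C(\sqrt{\lambda_s(k)}+1)$ of the Lemma and $\norm{\phi_s}_2\le\norm{1/\eps_0}_\infty^{1/2}$ this yields $|\lambda_s'(k)|\le C'(\sqrt{\lambda_s(k)}+1)$, hence a $\delta>0$, independent of $s$, such that $\lambda_s(k_0)=\mu_1$ implies $\lambda_s(k)\le\mu_1+1$ whenever $k\in[-\pi,\pi]$ and $|k-k_0|\le\delta$. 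Now if part (a) failed, there would be infinitely many distinct $s$ with $\lambda_s(k_s)=\mu_1$ for some $k_s$; covering $[-\pi,\pi]$ by finitely many intervals of length $\le\delta$, one of them, say of centre $k^*$, contains $k_s$ for infinitely many $s$, and then $\lambda_s(k^*)\le\mu_1+1$ for infinitely many $s$ --- contradicting the bound $S_0$.

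For part (b), fix one of the finitely many relevant $s$. By the results quoted before the Lemma, $\lambda_s$ extends to a real-analytic function on a complex neighbourhood of $[-\pi,\pi]$, so $\{k\in[-\pi,\pi]:\lambda_s(k)=\mu_1\}$ is either finite or all of $[-\pi,\pi]$. In the second case $\mu_1$ would be an eigenvalue of $A_k$ for every $k$, hence --- since $\{k:\mu_1\in\sigma_p(A_k)\}$ would then have full measure --- an eigenvalue of the fibred operator $L_0(k_x)=\int^{\bigoplus}A_k\,dk$. This is impossible, because the spectrum of the periodic operator $L_0(k_x)$ is purely absolutely continuous (Floquet theory and Thomas' argument; see \cite{KuchmentBook}). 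Hence $\lambda_s^{-1}(\mu_1)$ is finite, and together with part (a) this gives finitely many pairs $(s,k)$ with $\lambda_s(k)=\mu_1$.

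The step I expect to be the crux is part (a): it is routine to bound, for each \emph{fixed} $k$, the number of bands below a given level, but a band might a priori return to the value $\mu_1$ at many different $k$, and converting the pointwise bound into a global one is exactly what the gradient estimate of the Lemma is designed for. The only ingredient not internal to the paper is the absolute continuity of the periodic spectrum, used in part (b) to exclude a flat band at $\mu_1$; if that is taken as a standing assumption on the background medium, part (b) is immediate.
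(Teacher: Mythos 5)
Your proof is correct, and its skeleton matches the paper's: split into (a) only finitely many bands can touch $\mu_1$ and (b) each analytic band function meets $\mu_1$ only finitely often, with (b) settled exactly as in the paper (analyticity plus the Thomas argument excluding a flat band; phrasing it via absolute continuity of $\sigma(L_0(k_x))$ is the same ingredient, and it is no more ``external'' than the paper's own citation of Thomas). Where you genuinely differ is the mechanism behind (a). The paper invokes Kato's estimate for holomorphic families of type (A), $|\lambda_s(k)-\lambda_s(k_0)|\leq (C_1+|\lambda_s(k_0)|)(e^{C_2|k-k_0|}-1)$ with $s$-independent constants, and concludes the stronger statement $\inf_{k}\lambda_s(k)\to\infty$ as $s\to\infty$ via a $\delta$-covering of $[-\pi,\pi]$. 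You instead combine a uniform-in-$k$ counting bound below the level $\mu_1+1$ (min--max comparison with the explicitly diagonalizable $-\Delta_k$, whose eigenvalues $(k_x+2\pi m)^2+(k+2\pi n)^2$ have a uniformly bounded counting function) with a Feynman--Hellmann slope bound $|\lambda_s'(k)|\leq C'(\sqrt{\lambda_s(k)}+1)$ obtained from the paper's gradient estimate \eqref{psi_s_estimate}, and then run the same covering argument. Both routes are sound; yours is more hands-on (explicit comparison operator, explicit use of the Lemma, which the paper's proof of Proposition \ref{finite} never actually uses), while the paper's is shorter and yields the quantitative divergence of $\inf_k\lambda_s(k)$ rather than just a bound on the number of touching bands. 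One harmless slip: to bound the weighted eigenvalues from below by the free ones you need $\int_{(0,1)^2}\eps_0|u|^2\leq\norm{\eps_0}_\infty\int_{(0,1)^2}|u|^2$, so the comparison constant is $\norm{\eps_0}_\infty$, not $\norm{1/\eps_0}_\infty$; since $\eps_0$ is bounded above and away from zero this does not affect the argument.
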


\begin{proof}
Let $k,k_0\in[-\pi,\pi]$. Arguing as in \cite[VII.3.6]{Kato}, we get that there exist
$s$-independent constants $C_1$, $C_2$ such that
$$|\lambda_s(k)-\lambda_s(k_0)|\leq (C_1+|\la_s(k_0)|)( e^{C_2|k-k_0|}-1).$$
Next, choose $|k-k_0|<\delta$, where $\delta$ is such that $e^{C_2\delta}-1<1/2$. Then
$$|\lambda_s(k)-\lambda_s(k_0)|\leq \frac12(C_1+|\la_s(k_0)|)$$
and so whenever $\la_s(k_0)\geq0$ we get
 $$\lambda_s(k)\geq \frac12\lambda_s(k_0)-C.$$
For fixed $k_0$, we
have $\lambda_s(k_0)\to\infty$ as $s \to\infty$. Hence $$\inf_{\{k:|k-k_0|<\delta\}}
\lambda_s(k)\geq \frac12\lambda_s(k_0)-C\to\infty\quad\hbox{ as } s\to \infty.$$
Since $\delta$ can be chosen independently of $s$ and $k_0$, we can cover $[-\pi,\pi]$ with finitely many intervals of length $\delta$ and so $
\inf_{k\in[-\pi,\pi]}
\lambda_s(k)\to\infty$ as $ s\to \infty.$
Therefore, there are only finitely many values of $s$ such that
$\lambda_{s}(k)=\mu_1$ for some $k$.

On the other hand, if for any fixed $s$ we have
$\lambda_s(k_j)=\mu_1$ for infinitely many $k_j$, then these must
accumulate in $[-\pi,\pi]$ and the analyticity of $\lambda_s$ would
imply $\lambda_s(k)\equiv \mu_1$. But applying the well-known
Thomas argument \cite{Thomas, KuchmentBook},
we see that none of the functions $\lambda_s$ can be constant.
\end{proof}

We now discuss the expansion of functions in $L^2_{\eps_0}((0,1)^2)$ in terms of the Bloch functions $\psi_s$.
Since for each $k$, the eigenfunctions $\psi_s(\cdot, k)$ form a complete orthonormal system
in $L^2_{\eps_0}((0,1)^2)$, we have for  $r\in L^2_{\epsilon_0}(0,1)^2$ and $k\in  (-\pi,\pi)$
$$
r = \sum_{s\in\N} nm, \llangle r, \psi_s(\cdot, k) \rrangle_{L^2_{\eps_0}((0,1)^2)} \psi(\cdot, k)
$$
and hence by using Parseval's identity and integrating over $k$, we get
\beq\label{complete}
\norm{r}_{L^2_{\eps_0}((0,1)^2)}^2= \frac{1}{2\pi} \sum_{s\in\N}\int_{-\pi}^\pi |\llangle r, \psi_s(\cdot,k)\rrangle_{L^2_{\eps_0}((0,1)^2)}|^2 \ dk.
\eeq

We now derive a representation for $(L_0(k_x) -\la)^{-1}r$ with $r\in L^2_{\eps_0}((0,1)^2)$
in terms of Bloch functions. \emph{Whenever we apply an operator with domain $L^2(\Omega)$ to
a function $r\in L^2_{\eps_0}((0,1)^2)$, we extend $r$ to all of $\Omega$ by zero}. We will use
the same letter for the extended function.

Let $U$ denote the Floquet transform in the $y$-variable defined analogously to \eqref{Floquet} and set
\beq\label{eq:Pskr}
P_s(k,r):=\llangle Ur(\cdot,k),\psi_s(\cdot,\overline{k})\rrangle_{L^2_{\eps_0}((0,1)^2)}\psi_s(\cdot,k).
\eeq
We note that since $r$ is supported in $[0,1]^2$, we  have
\beq\label{eq:Pskr1}
P_s(k,r)=\frac{1}{\sqrt{2\pi}}\llangle r,\psi_s(\cdot,\overline{k})\rrangle_{L^2_{\eps_0}((0,1)^2)}\psi_s(\cdot,k).
\eeq
In view of the analyticity of the Bloch functions $\psi_s$, it follows that also $P_s$ depends analytically
on $k$ in a small complex neighborhood of $[-\pi,\pi]$ (note that $k \mapsto \overline{\psi_s(\cdot, \overline{k})}$
is analytic in $k$, which can be seen by expansion into power series).

Using this notation we  have the following resolvent formula \cite{KuchmentBook, OdehKeller, RS} for the operator
$L_0(k_x)$:
\beq
(L_0(k_x) -\la)^{-1}r=\frac{1}{\sqrt{2\pi}} \sum_{s \in \N} \int_{-\pi}^\pi(\lambda_s(k)-\la)^{-1}P_s(k,r)dk
\label{eq:res1}
\enq
for $\la$ outside the spectrum of $L_0(k_x)$ and $r\in L^2((0,1)^2)$.
\section{Generation of spectrum in the gap}
\label{genofspec}
In this section we use the representation of the
resolvent by Bloch functions and the variational principle to show that
if the perturbation $\eps_1$ is of fixed sign it will lead to the generation of extra spectrum in the
gap $(\mu_0,\mu_1)$ of the spectrum of the operator $L_0(k_x)$.
For definiteness, in this section we assume that $\eps_1$ is a
non-negative function which is positive on a set of positive measure. Then
there exist $\alpha>0$ and a set $D$ of positive measure such that $\inf_D \eps_1|_{\Omega}=\alpha$.

In \eqref{FP}, set $$v:=\sqrt{\frac{\eps_1}{\eps_0}}u.$$
Then $v$ is supported in $[0,1]^2$, as $\eps_1|_\Omega$ is, and $v$ satisfies
\beq\label{eq:v2}
v=\lambda\sqrt{\frac{\eps_1}{\eps_0}}\left(L_0(k_x)-\lambda\right)^{-1}\sqrt{\frac{\eps_1}{\eps_0}}v.
\eeq
Note that vice versa, if $v$ satisfies \eqref{eq:v2}, then
$$u:=\lambda\left(L_0(k_x)-\lambda\right)^{-1}\sqrt{\frac{\eps_1}{\eps_0}}v$$
satisfies \eqref{FP} and lies in $L^2_{\eps_0}(\Omega)$. It is therefore sufficient for our purposes to study \eqref{eq:v2}.

We now define the operator $A_\lambda$ on $L^2_{\eps_0}({(0,1)^2})$ by
$$
A_\lambda v=\left( \lambda\sqrt{\frac{\eps_1}{\eps_0}}\left(L_0(k_x)-\lambda\right)^{-1}\sqrt{\frac{\eps_1}{\eps_0}} v\right)\Bigg\vert_{(0,1)^2}
$$
and note that \eqref{eq:v2} has a non-trivial solution if and only if $1$ is an eigenvalue of the operator $A_\lambda$.

  \begin{lemma} Let $\la\in\R$. Then
$  A_{\la} : L^2_{\eps_0}({(0,1)^2}) \to L^2_{\eps_0}({(0,1)^2}) $ is symmetric and compact.
  \end{lemma}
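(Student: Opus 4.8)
The plan is to prove the two assertions—symmetry and compactness—separately, using the resolvent representation \eqref{eq:res1} as the main tool. For symmetry, I would write out the inner product $\langle A_\lambda v, w\rangle_{L^2_{\eps_0}((0,1)^2)}$ for $v,w\in L^2_{\eps_0}((0,1)^2)$, using that $\sqrt{\eps_1/\eps_0}$ is a real multiplication operator (so it is its own adjoint) and that $(L_0(k_x)-\lambda)^{-1}$ is self-adjoint on $L^2_{\eps_0}(\Omega)$ for real $\lambda$ outside the spectrum—which holds precisely because $\lambda\in(\mu_0,\mu_1)$ lies in a gap. The only point needing care is the restriction to $(0,1)^2$ together with the extension-by-zero convention: since $v,w$ are supported in $[0,1]^2$ and $\sqrt{\eps_1/\eps_0}$ is supported in $\Omega\cap([0,1]^2)$, restricting the output to $(0,1)^2$ before taking the inner product against $w$ changes nothing, so $\langle A_\lambda v,w\rangle=\langle v,A_\lambda w\rangle$ follows directly.

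For compactness, I would exploit the resolvent formula \eqref{eq:res1}. Applied to $r=\sqrt{\eps_1/\eps_0}\,v$, it gives
\beq
A_\lambda v = \sqrt{\tfrac{\eps_1}{\eps_0}}\,\frac{\lambda}{\sqrt{2\pi}}\sum_{s\in\N}\int_{-\pi}^{\pi}\frac{1}{\lambda_s(k)-\lambda}\,P_s\!\left(k,\sqrt{\tfrac{\eps_1}{\eps_0}}\,v\right)dk\,\Big\vert_{(0,1)^2}.
\eeq
The strategy is to show that the tail of this series (over $s\ge N$) is small in operator norm, uniformly, so that $A_\lambda$ is a norm-limit of finite-rank-like operators—or, more robustly, of Hilbert–Schmidt operators. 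The key quantitative input is Proposition \ref{finite}, which ensures $\lambda$ stays bounded away from $\{\lambda_s(k)\}$ for all but finitely many $s$; more precisely, combined with the argument in the proof of Proposition \ref{finite} showing $\inf_{k}\lambda_s(k)\to\infty$, we get $|\lambda_s(k)-\lambda|\ge c\,\lambda_s(k)$ for $s$ large and all $k$, with $c>0$. Using \eqref{eq:Pskr1}, the norm of the $s$-th term is controlled by $(\lambda_s(k)-\lambda)^{-1}\|\sqrt{\eps_1/\eps_0}\|_\infty^2\,\|\psi_s(\cdot,k)\|$-type factors, and then $\sum_{s\ge N}\int \lambda_s(k)^{-2}\,dk$ converges because $\lambda_s(k)$ grows like $s$ (Weyl asymptotics for the two-dimensional cell operator, or simply the divergence rate extracted in Proposition \ref{finite}'s proof). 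Thus the tail is small and, since each partial sum is an integral over $k$ of rank-one operators and hence compact (being a norm-convergent Riemann-type integral of compact operators, using analyticity/continuity of $P_s(k,\cdot)$ in $k$), $A_\lambda$ is compact as a uniform limit of compact operators.

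The main obstacle I anticipate is making the tail estimate genuinely uniform and summable: one needs both a lower bound $|\lambda_s(k)-\lambda|\gtrsim\lambda_s(k)$ valid for \emph{all} $k\in[-\pi,\pi]$ simultaneously (for $s$ beyond some $N$), and a growth/summability statement for $\lambda_s(k)^{-2}$ in $s$. The first follows from Proposition \ref{finite} together with the quantitative continuity estimate $\inf_{k}\lambda_s(k)\ge\frac12\lambda_s(k_0)-C$ already established there; the second requires knowing $\lambda_s(k)\to\infty$ fast enough in $s$—at worst one can fall back on the crude bound from Proposition \ref{finite}'s proof that $\inf_k\lambda_s(k)\to\infty$, and note that a min-max comparison of $-\tfrac1{\eps_0}\Delta_k$ with a fixed elliptic operator on $(0,1)^2$ gives $\lambda_s(k)\ge c\,s$, making $\sum_s\lambda_s(k)^{-2}<\infty$ uniformly. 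A secondary technical point is justifying that the $k$-integral of rank-one operators is compact: this is routine since the integrand is continuous in operator norm in $k$ on a compact interval, so the integral is a norm-limit of Riemann sums of finite-rank operators.
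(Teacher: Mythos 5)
Your symmetry argument is the same as the paper's (self-adjointness of $(L_0(k_x)-\la)^{-1}$ for real $\la$ in the gap, realness of the multiplication by $\sqrt{\eps_1/\eps_0}$, and the observation that extension-by-zero/restriction is harmless), so nothing to add there. For compactness you take a genuinely different route: the paper does not touch the Bloch expansion at all, but simply notes that $(L_0(k_x)-\la)^{-1}\sqrt{\eps_1/\eps_0}$ maps $L^2$ boundedly into $H^1(\Omega)$ by standard elliptic estimates, restricts to the bounded square $(0,1)^2$, and invokes the compact embedding $H^1((0,1)^2)\hookrightarrow L^2((0,1)^2)$, with the multiplications by $\sqrt{\eps_1/\eps_0}$ and the weight $\eps_0$ only contributing bounded factors. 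Your approach instead approximates $A_\la$ in operator norm by the finite-band truncations of the representation \eqref{eq:res1}, each of which is compact as a norm-convergent $k$-integral of rank-one operators; this is closer in spirit to the arguments the paper uses later for $B_\mu$ in Section \ref{nonaccum}, and it has the advantage of not needing elliptic regularity, at the price of needing the spectral information $\inf_k\la_s(k)\to\infty$ from Proposition \ref{finite}.

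Two points in your tail estimate deserve care. First, as written you bound the norm of the $s$-th term by a multiple of $(\la_s(k)-\la)^{-1}$ and then sum $\la_s(k)^{-2}$; summing \emph{squares} of term norms only controls the norm of the sum if you first use orthogonality, i.e.\ the Parseval identity \eqref{complete} applied in $L^2_{\eps_0}(\Omega)$ \emph{before} multiplying by $\sqrt{\eps_1/\eps_0}$ and restricting (a bare triangle inequality would require $\sum_s\sup_k(\la_s(k)-\la)^{-1}<\infty$, which fails under Weyl growth $\la_s\sim cs$). Second, once you do use Parseval on the fibers, the whole summability discussion and the Weyl-type lower bound $\la_s(k)\ge cs$ become unnecessary: the tail norm is bounded by $\sup_{s\ge N,\,k\in[-\pi,\pi]}|\la_s(k)-\la|^{-1}$, which tends to $0$ as $N\to\infty$ directly by the estimate $\inf_k\la_s(k)\to\infty$ extracted in the proof of Proposition \ref{finite}. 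With that repair your argument is complete, though noticeably longer than the paper's one-line elliptic-regularity-plus-Rellich proof.
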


\begin{proof} Let $u,v\in L^2_{\eps_0}({(0,1)^2})$. Then
  \begin{eqnarray*}
  \llangle \eps_0 A_\la u,v\rrangle_{L^2({(0,1)^2})} &=& \llangle \eps_0 \la  \left ( -\frac1{\eps_0}\Delta -\la\right )^{-1} \sqrt{\frac{\eps_1}{\eps_0}}u, \sqrt{\frac{\eps_1}{\eps_0}} v\rrangle_{L^2(\Omega)}
   \\
  &=&\llangle \eps_0\sqrt{\frac{\eps_1}{\eps_0}} u,\la  \left ( -\frac1{\eps_0}\Delta -\la\right )^{-1} \sqrt{\frac{\eps_1}{\eps_0}} v \rrangle_{L^2(\Omega)} \\
  &=&\llangle \eps_0u,\la\sqrt{\frac{\eps_1}{\eps_0}} \left ( \left ( -\frac1{\eps_0}\Delta -\la\right )^{-1} \sqrt{\frac{\eps_1}{\eps_0}} v\right ) \Bigg \vert_{(0,1)^2} \rrangle_{L^2({(0,1)^2})}\\
   \ &=&\ \llangle \eps_0 u, A_{\la} v\rrangle_{L^2((0,1)^2)},
  \end{eqnarray*}
  so the operator is symmetric. Moreover, by standard estimates (see e.g. \cite{Evans}) for the elliptic operator on the strip,
  \bea
  \norm{ \left ( -\frac1{\eps_0}\Delta -\la \right )^{-1}\sqrt{\frac{\eps_1}{\eps_0}} u}_{H^1({(0,1)^2})}
  &\leq   \norm{ \left ( -\frac1{\eps_0}\Delta -\la \right )^{-1} \sqrt{\frac{\eps_1}{\eps_0}} u }_{H^1(\Omega)}\\
  &\leq C_{\la} \norm{  u }_{L^2(\Omega)}=C_{\la}\norm{ u}_{L^2((0,1)^2)}.
  \eea
  Thus $A_\la $ is the composition of a compact map with the continuous map of multiplication by the function $\sqrt{\frac{\eps_1}{\eps_0}}$ and
  is therefore compact as a map from $L^2({(0,1)^2}) \to L^2({(0,1)^2})$. Multiplication by the bounded and boundedly invertible weight $\eps_0$ does not change this.\hfill\end{proof}

We now investigate the dependence of the maximum eigenvalue  of $A_\la$ on $\la$. First define
  $$\kappa_{max}(\la):=\sup_{\norm{ u} \neq 0} \frac{ \llangle A_\la u,u\rrangle_{\eps_0}}{\llangle u,u\rrangle_{\eps_0}}.$$
  Then if $\kappa_{max}(\la)>0$, it is the maximum eigenvalue of $A_\lambda$.

\begin{lemma} 
\begin{enumerate}
    \item On the interval $(\mu_0,\mu_1)$ the map $\lambda \mapsto \kappa_{\max}(\la)$ is continuous and  increasing.
    \item If there exists $\lambda'\in(\mu_0,\mu_1)$ such that $\kappa_{\max}(\la')>0$, then  $\lambda \mapsto \kappa_{\max}(\la)$ is strictly increasing on $(\la',\mu_1)$.
\end{enumerate}
\end{lemma}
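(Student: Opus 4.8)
The plan is to obtain an explicit formula for $\kappa_{\max}(\la)$ as a supremum of Rayleigh-type quotients whose $\la$-dependence is transparent, and then read off continuity and monotonicity. Concretely, for $u\in L^2_{\eps_0}((0,1)^2)$ extended by zero to $\Omega$, write $w:=\sqrt{\eps_1/\eps_0}\,u$ (supported in $[0,1]^2$) and use the resolvent representation \eqref{eq:res1} together with the expansion \eqref{complete} to compute
\[
\llangle A_\la u,u\rrangle_{\eps_0}
=\la\,\llangle (L_0(k_x)-\la)^{-1}w,w\rrangle_{\eps_0}
=\frac{\la}{2\pi}\sum_{s\in\N}\int_{-\pi}^{\pi}\frac{|\llangle w,\psi_s(\cdot,\bar k)\rrangle_{\eps_0}|^2}{\lambda_s(k)-\la}\,dk .
\]
Since $\la\in(\mu_0,\mu_1)$ lies in a gap, each $\lambda_s(k)$ is either $\ge\mu_1>\la$ or $\le\mu_0<\la$, so the sum splits into a part with $\lambda_s(k)>\la$ (where $1/(\lambda_s(k)-\la)$ is positive and strictly increasing in $\la$) and a part with $\lambda_s(k)<\la$ (negative, and $1/(\lambda_s(k)-\la)$ increasing in $\la$ as well, since $\tfrac{d}{d\la}(\lambda_s(k)-\la)^{-1}=(\lambda_s(k)-\la)^{-2}>0$). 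Thus for each fixed $u$, the function $\la\mapsto \llangle A_\la u,u\rrangle_{\eps_0}$ is a product $\la\cdot g_u(\la)$ with $g_u$ increasing; one must be slightly careful because of the factor $\la$ and the sign of $g_u$, but on the gap $\la>\mu_0>0$ and, restricting to the situation of interest, $g_u(\la)$ can be handled directly.

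Next I would establish continuity and monotonicity of each individual quotient $q_u(\la):=\llangle A_\la u,u\rrangle_{\eps_0}/\llangle u,u\rrangle_{\eps_0}$ in $\la$, and then pass to the supremum. For part (1): each $q_u$ is continuous on $(\mu_0,\mu_1)$ (dominated convergence in the series representation, the denominators staying bounded away from zero on compact subsets of the gap), and a supremum of continuous functions is lower semicontinuous; for upper semicontinuity one uses the compactness of $A_\la$ from the previous lemma, which guarantees the supremum is attained and that the maximizing eigenfunction depends suitably on $\la$ (or, more cleanly, one notes that $\la\mapsto A_\la$ is continuous in operator norm on the gap, hence $\kappa_{\max}$, being the top of the spectrum, is continuous). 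For monotonicity: differentiating $q_u$ in $\la$ term-by-term, $\tfrac{d}{d\la}\bigl(\la(\lambda_s(k)-\la)^{-1}\bigr)=\lambda_s(k)(\lambda_s(k)-\la)^{-2}$; this has the same sign as $\lambda_s(k)$, which is positive whenever $\lambda_s(k)>\la$ but negative when $\lambda_s(k)<\la$ — so naive term-by-term differentiation does not immediately give monotonicity. The cleaner route is to write $\la(\lambda_s(k)-\la)^{-1}=-1+\lambda_s(k)(\lambda_s(k)-\la)^{-1}$, so that
\[
q_u(\la)=-1+\frac{1}{2\pi\,\|u\|_{\eps_0}^2}\sum_{s\in\N}\int_{-\pi}^{\pi}\frac{\lambda_s(k)\,|\llangle w,\psi_s(\cdot,\bar k)\rrangle_{\eps_0}|^2}{\lambda_s(k)-\la}\,dk,
\]
and now, using $\mu_0>0$ so that $\lambda_s(k)>0$ whenever $\lambda_s(k)>\mu_0>\la$ is false — i.e. the bands below the gap have $0<\lambda_s(k)\le\mu_0$, still positive — every coefficient $\lambda_s(k)$ is positive, hence $\tfrac{d}{d\la}[\lambda_s(k)(\lambda_s(k)-\la)^{-1}]=\lambda_s(k)(\lambda_s(k)-\la)^{-2}>0$ and each $q_u$ is (weakly) increasing on the gap. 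The supremum of increasing functions is increasing, giving (1).

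For part (2): if $\kappa_{\max}(\la')>0$, let $u_0$ be a corresponding eigenfunction, $A_{\la'}u_0=\kappa_{\max}(\la')u_0$, so $q_{u_0}(\la')=\kappa_{\max}(\la')>0$. From the formula above, $q_{u_0}(\la')>0$ forces the series part to exceed $1$, which in particular requires at least one term with $\lambda_s(k)-\la'>0$ to contribute a genuinely positive amount, i.e. the function $k\mapsto\llangle w_0,\psi_s(\cdot,\bar k)\rrangle_{\eps_0}$ (with $w_0=\sqrt{\eps_1/\eps_0}\,u_0$) is not a.e. zero on a set where $\lambda_s(k)>\la'$; since this set has positive measure and the integrand $\lambda_s(k)(\lambda_s(k)-\la)^{-2}$ is \emph{strictly} positive there, $q_{u_0}$ is \emph{strictly} increasing on $(\la',\mu_1)$. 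Hence for $\la'<\la_1<\la_2<\mu_1$,
\[
\kappa_{\max}(\la_2)\ge q_{u_0}(\la_2)>q_{u_0}(\la_1),
\]
and since also $\kappa_{\max}(\la_1)\ge q_{u_0}(\la_1)>0$ we are not yet done — but one iterates: $\kappa_{\max}(\la_1)=q_{u_1}(\la_1)>0$ for some eigenfunction $u_1$ at level $\la_1$, the same argument shows $q_{u_1}$ is strictly increasing on $(\la_1,\mu_1)$, so $\kappa_{\max}(\la_2)\ge q_{u_1}(\la_2)>q_{u_1}(\la_1)=\kappa_{\max}(\la_1)$. This gives strict monotonicity on $(\la',\mu_1)$.

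I expect the main obstacle to be the continuity part of (1): justifying that the supremum over $u$ of the continuous quotients $q_u$ is itself continuous (not merely lower semicontinuous) on the open gap. The honest fix is to invoke the compactness of $A_\la$ from the preceding lemma together with norm-continuity of $\la\mapsto A_\la$ on compact subsets of $(\mu_0,\mu_1)$ — which follows because $(L_0(k_x)-\la)^{-1}$ is norm-continuous in $\la$ off the spectrum — so that $\kappa_{\max}(\la)$, being $\max\{0,\ \text{largest eigenvalue of }A_\la\}$, inherits continuity from the continuity of spectra of norm-continuous families of compact self-adjoint operators.
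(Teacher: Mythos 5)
Your proof is correct, and part (1) follows essentially the paper's own route: continuity comes from norm-continuity of $\la\mapsto A_\la$ (the top of the spectrum is a $1$-Lipschitz function of a self-adjoint operator), and weak monotonicity from the termwise observation that $\la\mapsto\la/(\la_s(k)-\la)$ is nondecreasing across the gap because $(\la_s(k)-\la)(\la_s(k)-\tilde\la)>0$ there and $\la_s(k)\ge 0$; your rewriting $\la/(\la_s-\la)=-1+\la_s/(\la_s-\la)$ is the same computation in derivative form (minor slip: the constant term in your formula for $q_u$ is $-\norm{w}_{\eps_0}^2/\norm{u}_{\eps_0}^2$, not $-1$, but being $\la$-independent this is harmless). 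Where you genuinely diverge is part (2). The paper invokes Kato's analytic perturbation theory: since $\kappa_{\max}(\la)\ge\kappa_{\max}(\la')>0$ keeps the top eigenvalue of the compact operator away from the essential-spectrum point $0$, $\kappa_{\max}$ is piecewise analytic with analytically varying eigenfunctions $u_\la$, and a Feynman--Hellmann computation gives $\partial_\la\kappa_{\max}(\la)=\llangle \partial_\la A_\la u_\la,u_\la\rrangle_{\eps_0}=\frac{1}{2\pi}\sum_s\int_{-\pi}^{\pi}\la_s(k)(\la_s(k)-\la)^{-2}\bigl|\llangle\sqrt{\eps_1/\eps_0}\,u_\la,\psi_s\rrangle_{\eps_0}\bigr|^2dk>0$, the strict sign coming from the same contradiction you use (vanishing of all coefficients would force $A_\la u_\la=0$). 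You avoid perturbation theory altogether: freeze the maximizing eigenfunction $u_1$ at the smaller point $\la_1$ (which exists since $\kappa_{\max}(\la_1)\ge\kappa_{\max}(\la')>0$ and $A_{\la_1}$ is compact --- worth stating explicitly), note that $q_{u_1}$ is strictly increasing because the positive-band contribution cannot vanish when $q_{u_1}(\la_1)>0$, and conclude $\kappa_{\max}(\la_2)\ge q_{u_1}(\la_2)>q_{u_1}(\la_1)=\kappa_{\max}(\la_1)$; your iteration (re-choosing the eigenfunction at each lower point rather than keeping $u_0$ from $\la'$) is exactly the right fix. Your route is more elementary: it sidesteps the eigenvalue-crossing and piecewise-analyticity discussion, and if you use the finite-difference identity instead of differentiating under the sum, even the interchange-of-limits justification. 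The paper's route costs the appeal to Kato but yields an explicit derivative formula for $\kappa_{\max}$ along analytic branches, which is of some independent interest.
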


\begin{proof}
  (1)  $A_\la=\lambda\sqrt{\frac{\eps_1}{\eps_0}}\left(L_0(k_x)-\lambda\right)^{-1}\sqrt{\frac{\eps_1}{\eps_0}}$, so
  $\la\mapsto A_\la $ is norm continuous as a map from
  $(\mu_0,\mu_1)$ to ${\mathcal L}(L_{\eps_0}^2)$ and for any $\lambda \in (\mu_0,\mu_1)$ and any
  $\tilde{\eps}>0$ there exists $\delta>0$ such that  for $|\la-\tilde \la|<\delta$
  $$\Big \vert \llangle A_{\tilde \la}u,u\rrangle_{\eps_0}-\llangle A_\la u,u\rrangle_{\eps_0} \Big \vert   \leq \norm{ A_\la-A_{\tilde\la}}\norm{  u}^2_{\eps_0} \leq \tilde{\eps} \norm{u}^2_{\eps_0}.
  $$
  This implies  $\Big \vert \kappa_{max}(\tilde \la)-\kappa_{max}(\la)\big \vert \leq \tilde{\eps}$, so $\la \mapsto \kappa_{max}(\la)$ is continuous.

  Let $\mu_0<\la<\tilde \la<\mu_1$.
  Then
   $$\frac{\tilde \la}{\la_s(k)-\tilde \la}-\frac\la{\la_s(k)-\la}=\frac{ (\tilde \la-\la)\la_s(k)}{(\la_s(k)-\tilde \la)(\la_s(k)-\la)} \geq 0
   $$
   since $ (\la_s(k)-\tilde \la) (\la_s(k)-\la)>0$ and $\la_s(k)\geq 0$ for all $s$ and all $k$.   Thus $\la\mapsto \kappa_{max}(\la)$ is monotonically increasing. This proves the first statement of the lemma.

  (2)
  We now investigate strict monotonicity of  $\la \mapsto \kappa_{\max}(\la)$.  We note that $\la \mapsto \kappa_{\max}(\la)$ need not be
  differentiable, as it is possible for eigenvalue branches to cross.
  However, $\kappa_{\max}$ is piecewise analytic on the interval
  $[\lambda', \mu_1)$. Note that for $\lambda>\lambda'$, $\kappa_{\max}(\lambda) \geq \kappa_{\max}(\lambda') > 0$ by the monotonicity shown
  in the first part of the proof, so the point $0$ in the spectrum of $A_\lambda$ does not cause difficulties for
  the analytic continuation of the eigenvalues (the reader may refer to the discussion in Kato's book,
  \cite[Theorem VII.3.9 and Remark VII.3.11]{Kato}).

  Consider now some subinterval of $[\lambda', \mu_1)$ where $\kappa_{\max}$ is analytic
  and let $u_\la$ be the corresponding normalised eigenfunction of $A_\la$, depending analytically on $\lambda$.
  Now, $$\kappa_{\max}(\la)=\llangle  A_\la u_\la,u_\la\rrangle_{\eps_0},$$ so  using the symmetry of $A_\la$ we have

\bea
\frac{\partial \kappa_{\max}(\la)}{\partial \la} &=&
  \llangle \frac{ \partial A_\la}{\partial \la} u_\la,u_\la\rrangle_{\eps_0} + \llangle A_\la \frac{ \partial u_\la}{\partial \la},u_\la\rrangle_{\eps_0} +\llangle A_\la u_\la,\frac{\partial u_\la}{\partial \la}\rrangle_{\eps_0} \\
  &=&
  \llangle \frac{ \partial A_\la}{\partial \la} u_\la,u_\la\rrangle_{\eps_0} + \kappa_{\max}(\la)\underbrace{\left ( \llangle   \frac{ \partial u_\la}{\partial \la},u_\la\rrangle_{\eps_0}+
 \llangle u_\la , \frac{ \partial u_\la}{\partial \la} \rrangle_{\eps_0} \right )}_{= \frac{\partial}{\partial \la }\llangle u_\la,u_\la \rrangle_{\eps_0}=0}\\
 &=&\llangle \frac{ \partial A_\la}{\partial \la} u_\la,u_\la\rrangle_{\eps_0} \ =\ \frac{1}{2\pi}\sum_s \int_{-\pi}^{\pi}\frac{\la_s(k)}{(\la_s(k)-\la)^2} \Bigg \vert \llangle \sqrt{\frac{\eps_1}{\eps_0}}u_\la,\psi_s\rrangle_{\eps_0}\Bigg \vert^2  dk,
 \eea
  where in the last step we have used the representation of the resolvent via Bloch functions \eqref{eq:res1}.
  Now, $\la_s(k)>0$  for $k\neq 0$ (else the operator $L_0(k_x,k)$ would have a non-positive eigenvalue).
  Assume $\frac{\partial \kappa_{\max}(\la)}{\partial \la}=0 $. Then $\llangle \sqrt{\frac{\eps_1}{\eps_0}} u_\la,\psi_s\rrangle =0$ for a.e. $k$ and all $  s \in \N$.
  Thus $\sqrt{\frac{\eps_1}{\eps_0}} u_\la=0$ which implies $A_\la u_\la=0$. However, this cannot be the case, as $\kappa_{\max}(\la)>0$. Therefore, $\kappa_{\max}(\lambda)$ is strictly increasing on $(\la',\mu_1)$.\end{proof}

We are now ready to prove our main result which shows that
small perturbations of $\eps_0$ create eigenvalues in the spectral gap.
\begin{theorem}\label{defectmodes}
Assume that \beq \label{pertnorm} \norm{\eps_1}_\infty<\frac{(\mu_1-\mu_0)\inf \eps_0}{\mu_0}.\eeq
Then there exists an eigenvalue of the operator $L(k_x)$ in the spectral gap $(\mu_0,\mu_1)$ of $L_0(k_x)$.
\end{theorem}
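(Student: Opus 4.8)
The plan is to reduce the statement to producing a value $\tilde\la\in(\mu_0,\mu_1)$ with $\kappa_{\max}(\tilde\la)=1$. If such a $\tilde\la$ exists then, since $1>0$, it is the largest eigenvalue of the compact symmetric operator $A_{\tilde\la}$, so $1\in\sigma(A_{\tilde\la})$; hence \eqref{eq:v2} has a non-trivial solution $v$, and $u:=\tilde\la(L_0(k_x)-\tilde\la)^{-1}\sqrt{\eps_1/\eps_0}\,v\in L^2_{\eps_0}(\Omega)$ solves \eqref{FP} and is non-zero (since $v=\sqrt{\eps_1/\eps_0}\,u$), i.e.\ $\tilde\la$ is an eigenvalue of $L(k_x)$ in $(\mu_0,\mu_1)$. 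By the previous lemma $\la\mapsto\kappa_{\max}(\la)$ is continuous on $(\mu_0,\mu_1)$, so by the intermediate value theorem it suffices to find one $\la$ with $\kappa_{\max}(\la)<1$ and one with $\kappa_{\max}(\la)>1$.

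For the first, I would establish the a priori bound
\[
\kappa_{\max}(\la)\le\frac{\la\,\norm{\eps_1}_\infty}{(\mu_1-\la)\,\inf\eps_0},\qquad\la\in(\mu_0,\mu_1).
\]
Combining the resolvent formula \eqref{eq:res1} with the Parseval identity \eqref{complete} exactly as in the proof of the previous lemma gives
\[
\langle A_\la v,v\rangle_{\eps_0}=\frac{\la}{2\pi}\sum_{s\in\N}\int_{-\pi}^\pi\frac{\big|\langle\sqrt{\eps_1/\eps_0}\,v,\psi_s(\cdot,k)\rangle_{\eps_0}\big|^2}{\la_s(k)-\la}\,dk .
\]
Only the summands with $\la_s(k)>\la$ are positive, and for those $\la_s(k)\ge\mu_1$ because $(\mu_0,\mu_1)$ is a gap, so the whole sum is at most $\tfrac{\la}{2\pi(\mu_1-\la)}\sum_s\int_{-\pi}^\pi|\langle\sqrt{\eps_1/\eps_0}\,v,\psi_s\rangle_{\eps_0}|^2\,dk=\tfrac{\la}{\mu_1-\la}\norm{\sqrt{\eps_1/\eps_0}\,v}_{\eps_0}^2$ by \eqref{complete}, and $\norm{\sqrt{\eps_1/\eps_0}\,v}_{\eps_0}^2=\int\eps_1|v|^2\le(\norm{\eps_1}_\infty/\inf\eps_0)\norm{v}_{\eps_0}^2$. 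As $\la\downarrow\mu_0$ the right-hand side of the bound tends to $\mu_0\norm{\eps_1}_\infty/\big((\mu_1-\mu_0)\inf\eps_0\big)$, which is strictly less than $1$ precisely by hypothesis \eqref{pertnorm}; hence $\kappa_{\max}(\la)<1$ on some interval $(\mu_0,\mu_0+\eta)$.

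For the second, I would show $\kappa_{\max}(\la)\to+\infty$ as $\la\uparrow\mu_1$. By Proposition \ref{finite} there is a pair $(s_0,k_0)$ with $\la_{s_0}(k_0)=\mu_1$; since $\mu_1$ is the bottom of the band, $k_0$ is a minimum of $\la_{s_0}$, and by analyticity of $\la_{s_0}$ together with the Thomas argument (no $\la_{s_0}$ is constant), $\la_{s_0}-\mu_1$ vanishes to some finite order at $k_0$, so $\la_{s_0}(k)-\mu_1\le C|k-k_0|^N$ for $|k-k_0|\le\rho$. A unique continuation argument shows that the eigenfunction $\psi_{s_0}(\cdot,k_0)$ cannot vanish a.e.\ on the positive-measure set $\{\eps_1>0\}$, so one can choose a test function $v_0$ supported in $\{\eps_1>0\}$ with $\langle\sqrt{\eps_1/\eps_0}\,v_0,\psi_{s_0}(\cdot,k_0)\rangle_{\eps_0}\ne0$; by analyticity in $k$ this inner product stays bounded away from $0$ on an interval $I$ containing $[k_0-\rho,k_0+\rho]$. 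In the Bloch sum for $\langle A_\la v_0,v_0\rangle_{\eps_0}$ I keep only the $s=s_0$, $k\in I$ term; it is at least a positive constant times $\int_{|k-k_0|\le\rho}\big(C|k-k_0|^N+(\mu_1-\la)\big)^{-1}\,dk$, which a direct computation shows diverges as $\la\uparrow\mu_1$. The remaining terms are controlled uniformly for $\la$ near $\mu_1$: those with $\la_s(k)>\la$ are $\ge0$, while those with $\la_s(k)<\la$ — necessarily $\la_s(k)\le\mu_0$, as $(\mu_0,\mu_1)$ is a gap — sum to at least $-\tfrac{\la}{\la-\mu_0}\norm{\sqrt{\eps_1/\eps_0}\,v_0}_{\eps_0}^2$ by \eqref{complete}, which stays bounded. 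Thus $\kappa_{\max}(\la)\ge\langle A_\la v_0,v_0\rangle_{\eps_0}/\norm{v_0}_{\eps_0}^2\to+\infty$, so $\kappa_{\max}(\la)>1$ for $\la$ close to $\mu_1$, and the intermediate value theorem finishes the proof.

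I expect the blow-up step to be the main obstacle. One must know that some band function genuinely attains $\mu_1$ and with finite contact order — this is exactly where Proposition \ref{finite}, the analyticity input from \cite{Kato}, and the Thomas argument are needed — and one must ensure that the test function actually couples to $\la_{s_0}$ near $k_0$; since $\eps_0$ is merely bounded one cannot appeal to real-analyticity of eigenfunctions and must instead invoke the unique continuation property for the periodic elliptic operator. The other delicate point is the uniform control of the tail of the Bloch sum as $\la\to\mu_1$, which again rests on (the argument behind) Proposition \ref{finite}: all but finitely many band functions, and $\la_{s_0}$ itself away from $I$, stay bounded away from $\mu_1$.
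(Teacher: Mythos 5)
Your proposal is correct and follows essentially the same route as the paper: reduce to $\kappa_{\max}(\la)=1$ via the compact symmetric operator $A_\la$ and the intermediate value theorem, get $\kappa_{\max}<1$ near $\mu_0$ from the Parseval/gap estimate under \eqref{pertnorm}, and force $\kappa_{\max}\to\infty$ as $\la\nearrow\mu_1$ by testing with a function coupled (via unique continuation and continuity in $k$) to the band-edge Bloch function and using the finite, even vanishing order of $\la_{s'}-\mu_1$ to make the one-dimensional $k$-integral diverge, with the lower-band terms bounded by $-C_\la$. The only cosmetic differences are that the paper takes the explicit test function $\sqrt{\eps_0/\eps_1}\,\psi_{s'}(\cdot,k_0)$ on the set $D$ where $\eps_1\ge\alpha$ and evaluates the divergent integral explicitly via an arctan computation, while you argue with a generic coupled test function and a Taylor bound of order $N$.
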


\begin{proof}
We shall use the Rayleigh quotient to obtain information on the eigenvalues of $A_\la$. Let $u\in L^2({(0,1)^2})$, then using \eqref{eq:res1} we have
  \begin{eqnarray}
  &&\llangle \eps_0A_\la u,u \rrangle_{L^2({(0,1)^2})}\nonumber\\
   &=&\la \llangle \eps_0  \sqrt{\frac{\eps_1}{\eps_0}}\left [ \left ( -\frac1\eps_0\Delta -\la \right )^{-1}  \sqrt{\frac{\eps_1}{\eps_0}} u \right ]_{(0,1)^2}, u \rrangle \nonumber \\
  & =& \la \llangle \eps_0   \left ( -\frac1\eps_0\Delta -\la \right )^{-1}\sqrt{\frac{\eps_1}{\eps_0}} u,  \sqrt{\frac{\eps_1}{\eps_0}} u \rrangle_{L^2(\Omega)}
  \nonumber\\
  &= &\frac{\la}{2\pi} \int_{-\pi}^\pi \sum_{s \in \N} (\la_s(k)-\la)^{-1} \left| \llangle  \sqrt{\frac{\eps_1}{\eps_0}}u,\psi_{s}(\cdot,k)\rrangle_{L^2_{\eps_0}({(0,1)^2})}  \right|^2 dk. \label{form}
  \end{eqnarray}

To first get an upper estimate on the Rayleigh quotient, let $s'$ be such that $\mu_1$ is the lowest point of the $s'$-band and $\mu_0$ is the highest point of the $(s'-1)$-band. We note that such an $s'$ must exist for there to be a gap. Let $\lambda\in(\mu_0,\mu_1)$.
 Then since $\la_s(k)-\la\leq 0$ for $s<s'$ we have
 \ben \label{Rayleighup}
  \llangle \eps_0A_\la u,u \rrangle_{L^2({(0,1)^2})} &\leq & \frac{\la}{2\pi}\int_{-\pi}^\pi \sum_{s \geq s'} (\la_s(k)-\la)^{-1} \left| \llangle  \sqrt{\frac{\eps_1}{\eps_0}}u,\psi_{s}(\cdot,k)\rrangle_{\eps_0}  \right|^2 dk\nonumber
  \\ \nonumber
  &\leq& \frac{\la}{2\pi(\mu_1-\la)} \int_{-\pi}^\pi \sum_{s \geq s'}  \left| \llangle  \sqrt{\frac{\eps_1}{\eps_0}}u,\psi_{s}(\cdot,k)\rrangle_{\eps_0}  \right|^2 dk\\ \nonumber
  &\leq& \frac{\la}{2\pi(\mu_1-\la)} \int_{-\pi}^\pi \sum_{s\in\N}  \left| \llangle  \sqrt{\frac{\eps_1}{\eps_0}}u,\psi_{s}(\cdot,k)\rrangle_{\eps_0}  \right|^2 dk\\
  &=& \frac{\la}{(\mu_1-\la)} \norm{  \sqrt{\frac{\eps_1}{\eps_0}}u}_{\eps_0}^2 \ \leq\ \frac{\la\norm{\eps_1}_\infty}{(\mu_1-\la)\inf\eps_0} \norm{u}_{\eps_0}^2 .
 \een

 Therefore, if the perturbation $\eps_1$ is sufficiently small such that \eqref{pertnorm} holds, we can find $\lambda'\in(\mu_0,\mu_1)$ such that
 \beq\label{upperkappa} \kappa_{max}(\la')=\sup_{\norm{u}\neq 0} \frac{ \llangle A_{\la'} u ,u \rrangle_{\eps_0} }{\norm{u }_{\eps_0}^2} < 1.\eeq

 We next want to find a lower bound on the Rayleigh quotient.
Let $k_0$ be such that $\lambda_{s'}(k_0)=\mu_1$.

We remind the reader that $\eps_1\geq\alpha>0$ on a set $D$ of positive measure.
 It follows from unique continuation (see e.g. \cite{Protter})
 that for any Bloch function $\psi_s(\cdot,k)$ we have $$\int_D |\psi_s(\cdot,k)|^2 \eps_0>0 .$$
Therefore, and by continuity of the Bloch functions there exist $\delta>0$ and $a>0$ such that
  $$
  |  \llangle \psi_{s'}(\cdot, k_0),\psi_{s'}(\cdot,k)\rrangle_{L^2_{\eps_0}(D)} |^2\geq a
  $$
  for $k \in (k_0-\delta, k_0+\delta)$.
Now choose $u=\sqrt{\frac{\eps_0}{\eps_1}}\psi_{s'}(\cdot,k_0)$ on $D$ and extend $u$ by zero to $(0,1)^2$.
  Then from \eqref{form} we get that for $\lambda\in(\mu_0,\mu_1)$
  \bea
 \frac{ \llangle \eps_0 A_\la u,u \rrangle}{\norm{u}_{\eps_0}^2} &=&
 \frac{\la}{2\pi\norm{u}_{\eps_0}^2} \int_{-\pi}^\pi \sum_{s \in \N} (\la_s(k)-\la)^{-1} \left| \llangle \psi_{s'}(\cdot,k_0),\psi_{s}(\cdot,k)\rrangle_{L^2_{\eps_0}(D)}  \right|^2 dk.  \\
 &\geq&    \frac{\la}{2\pi\norm{u}_{\eps_0}^2} \int_{-\pi}^\pi \sum_{s \leq s'} (\la_s(k)-\la)^{-1} \left| \llangle \psi_{s'}(\cdot,k_0),\psi_{s}(\cdot,k)\rrangle_{L^2_{\eps_0}(D)}  \right|^2 dk.  \\
   &\geq& \frac{a\la}{2 \pi\norm{u}_{\eps_0}^2}
   \int_{k_0-\delta}^{k_0+\delta}\frac{dk}{ \la_{s'}(k)-\la} -\\
   &&\frac{1}{2\pi\norm{u}_{\eps_0}^2} \int_{-\pi}^\pi \sum_{s < s'} \frac{\la}{\la-\la_s(k)} \left| \llangle \psi_{s'}(\cdot,k_0)  ,\psi_{s}(\cdot,k)\rrangle_{L^2_{\eps_0}(D)}  \right|^2 dk\\
  &\geq& \frac{a\la}{2 \pi\norm{u}_{\eps_0}^2} \int_{k_0-\delta}^{k_0+\delta} \frac {dk}{ \la_{s'}(k)-\la}-C_\la,
  \eea
where $C_\la$ is bounded as long as $\la$ stays away from $\mu_0$.
Using the Taylor expansion of the analytic function $\la_{s'}$, which has a minimum at $k_0$,
  we see that $$\la_{s'}(k)\leq \mu_1+\alpha_n(k-k_0)^{2n}$$ for  $k \in (k_0-\delta, k_0+\delta)$ and
  for some $n \in \N$ with
  $$\alpha_n=\frac1{(2n)!} \max_{\xi \in [k_0-\delta,k_0+\delta]}\la_{s'}^{(2n)}(\xi) >0.$$
  Thus for $\delta<1$
  \begin{eqnarray*}
  \int_{k_0-\delta}^{k_0+\delta}  \frac{dk}{\lambda_{s'}(k)-\la} &\geq&   \int_{k_0-\delta}^{k_0+\delta}  \frac{dk}{\mu_1-\la+\alpha_n(k-k_0)^{2n}}\
  \geq\    \int_{k_0-\delta}^{k_0+\delta}  \frac{dk}{\mu_1-\la+\alpha_n(k-k_0)^2}
 \\
 &=& 
 \frac1{\mu_1-\la} \arctan \left ( \sqrt{\frac{\alpha_n}{\mu_1-\la}} X\right ) \sqrt{ \frac{ \mu_1-\la}{\alpha_n}}\Big\vert_{X=-\delta}^{X=\delta}\\
&=& \frac{2}{\sqrt{\alpha_n( \mu_1-\la)}} \arctan  \left ( \sqrt{\frac{\alpha_n}{\mu_1-\la} }\delta\right )  \to  \infty  \hbox{ as } \lambda\nearrow\mu_1 
  \end{eqnarray*}
  and hence $$\kappa_{max}(\la)\geq  \frac{ \llangle \eps_0 A_\la u,u \rrangle}{\norm{u}_{\eps_0}^2}  \to  \infty  \hbox{ as } \lambda\nearrow\mu_1.$$

Since by \eqref{upperkappa} we have some $\la'$ with  $\kappa_{max}(\la')<1$ and $\kappa_{max}(\cdot)$ is continuous, we can find $\la\in(\la',\mu_1)$ such that $\kappa_{max}(\lambda)=1$. From \eqref{FP} and \eqref{eq:v2} we see
that this gives an eigenvalue of the perturbed strip operator $L(k_x)$ between $\la'$ and $\mu_1$.\end{proof}

\begin{remark}
\begin{enumerate}
    \item Our result only holds for small perturbations. One would of course generally expect that larger perturbations lead to the creation of more spectrum in the gap as lower eigenvalues of $A_\la$ cross $1$, but our method of proof does not cover those cases. However, existence of guided modes for sufficiently large perturbations is shown for example in \cite{FK97,MiaoMa}.
    \item We consider a special case:
Assume that $\eps_1(x)=\alpha \tilde{\eps}(x)$ for some function $\tilde{\eps}(x)\geq0$ and $\alpha\in\R^+$. This enables us to study the spectrum in terms of the scalar parameter $\alpha$. We can interpret higher values of $\alpha$ as switching on or increasing the perturbation. Let $\At_\la=A_\la/\alpha$ and $\nu_{\max}(\la)$ be the maximum eigenvalue of $\At_\la$. Then we have $\la\in\sigma(L(k_x))$ iff $\alpha\nu_{\max}(\lambda)=1$.
This gives the following results:
\begin{enumerate}
    \item   Monotonicity of $\lambda\mapsto\nu_{\max}(\la)$ and the fact that the maximal eigenvalue $\nu_{\max}(\la)$ of $\At_\la$ tends to $+\infty$ as $\la\to \mu_1$
 implies that for any given small $\alpha$, we can find $\la$ such that $\nu_{\max}(\la)=\frac1{\alpha}$, so an additional spectral point is produced  as soon as the perturbation is switched on.
 \item   Strict monotonicity of $\lambda\mapsto\nu_{\max}(\la)$ implies that
  there  exist $\la'>\mu_0$ and $\beta >0$ such that $\nu_{\max}:(\la',\mu_1)\to (\beta,\infty)$ is invertible.
  Given $\alpha < \frac1\beta$,  the function $g(\alpha):=\nu_{\max}^{-1}(\frac1{\alpha})$ always gives an eigenvalue and $g(\alpha)\to \mu_1$ when $\alpha\to 0$.
 \item    Let $\la \in(\mu_0,\mu_1)$. The estimate \eqref{Rayleighup} shows that
 the maximum eigenvalue $\nu_{\max}(\la)$ of $\At_\la$ satisfies
   $$  \nu_{\max}(\la)\leq \frac{\la\norm{\tilde{\eps}}_\infty}{(\mu_1-\la)\inf\eps_0}.$$
      Hence, if $ \alpha < \frac{(\mu_1-\la)\inf\eps_0}{\la\norm{\tilde{\eps}}_\infty}$ then
  $\alpha \nu_{\max}(\la) <1$,
  so $\la$ is not an eigenvalue.
  This implies that the perturbation needs to have a certain size before the spectrum can appear at any  given   point in the gap away from $\mu_1$.
\end{enumerate}
\item  In the case when $\eps_1$ is a non-positive function analogous results can be shown by considering the operator
 $$ A'_\lambda v= -\lambda\sqrt{\frac{-\eps_1}{\eps_0}}\left(L_0(k_x)-\lambda\right)^{-1}\sqrt{\frac{-\eps_1}{\eps_0}} v.$$
 In this case, the extra eigenvalues created for small perturbations will appear at the bottom end of the spectral gap near $\mu_0$.
\end{enumerate}
\end{remark}

\section{Non-accumulation of the eigenvalues}
\label{nonaccum}

In this section we wish to show that new eigenvalues generated by the perturbation cannot accumulate at the band edges.
Unlike in section \ref{genofspec}, we no longer make any assumptions on the sign of the perturbation.

In \eqref{FP}, set $$v=\frac{\eps_1}{\eps_0}u.$$
Note that $v$ is supported in $[0,1]^2$, as $\eps_1\vert_\Omega$ is, and $v$ satisfies
\beq\label{eq:v}
v=\lambda\frac{\eps_1}{\eps_0}\left(L_0(k_x)-\lambda\right)^{-1}v.
\eeq
Therefore, we shall study the spectrum of the operator $\lambda\frac{\eps_1}{\eps_0}\left(L_0(k_x)-\lambda\right)^{-1}$ acting on functions supported in $[0,1]^2$.

\subsection{Analytic continuation of the resolvent}

Recall the  resolvent formula \eqref{eq:res1}:
\beq
(L_0(k_x) -\la)^{-1}r=\frac{1}{\sqrt{2\pi}} \sum_{s \in \N} \int_{-\pi}^\pi(\lambda_s(k)-\la)^{-1}P_s(k,r)dk
\label{eq:res2}
\eeq
 for $ \la$ outside the spectrum of $L_0(k_x)$ and $r\in L^2(\Omega)$ with compact support.
Restricted to functions with compact support, the resolvent $(L_0(k_x) -\la)^{-1}$ is compact and we would like to use meromorphic Fredholm theory (see Theorem \ref{Fredholm}) to
analyze the spectrum. However, the resolvent is not well-defined in a neighborhood of the band edge. To overcome this difficulty, we transform the critical integrals in \eqref{eq:res2} by integration over a suitable contour in the complex plane (see Figure \ref{fig2}) instead of the real interval $[-\pi,\pi]$. In this way, we obtain
an analytic operator family, see \eqref{resolvent}, that coincides with the resolvent on a sector of the complex plane.
We will do this construction for the lower edge of a band near $\mu_1$. A similar construction is
possible near the top end of a band. 


Note that by  Proposition \ref{finite} there are only finitely many pairs $(s_p,k_p)$, $p=1,...,N$ such that for $\lambda=\mu_1$ the integrand
$(\lambda_{s_p}(k)-\la)^{-1}P_{s_p}(k,r)$ in  \eqref{eq:res2} is singular at $k_p$, since $\lambda_{s_p}(k_p)=\mu_1$.
As none of the  analytic  functions $\la_s$ can have a zero of infinite multiplicity by the Thomas argument (see \cite{KuchmentBook,Thomas}), we have for each $p$ that $\lambda_{s_p}^{(j)}(k_p)= 0$ for $j=1,...,m_p-1$ and $\lambda_{s_p}^{(m_p)}(k_p)\neq 0$ for some even $m_p\geq 2$. Since $\mu_1$ lies at the bottom end of the band,
we can write $\lambda_{s_p}(k)=\mu_1+(k-k_p)^{m_p}g_p(k)$ with $g_p$ an analytic function of $k$, $g_p(k_p)> 0$ and $g_p(\R)\subseteq\R$.

We set $h_{s_p,k_p}(k):=(k-k_p)\sqrt[m_p]{ g_p(k)}$. Here, we choose the branch cut of the root away from the positive real axis, so that it does not intersect with the set $\{g_p(k)\}$ for $k$ in a small neighborhood of $k_p$. Since $h'_{s_p,k_p}(k_p)=\sqrt[m_p]{g_p(k_p)}\neq 0$, we can locally invert $h_{s_p,k_p}$ to obtain $k=h^{-1}_{s_p,k_p}(\nu)$, and $k$
will depend analytically on $\nu$ for $\nu$ in a 
ball $\cN$ centered at $0$. Let $\cN^+=\cN\cap\{\Im(\nu)>0\}$ and $\cN^-=\cN\cap\{\Im(\nu)<0\}$.
 Clearly, $h_{s_p,k_p}:\R\to\R$, so $h^{-1}_{s_p,k_p}:\cN^\pm\to\C\setminus\R$. As $\cN^\pm$ are connected, $h^{-1}_{s_p,k_p}(\cN^\pm)$ are connected and are therefore completely contained in either the lower or upper half-plane. Take $\nu=\pm is\in\cN^\pm$. Then, as $$\left(h^{-1}_{s_p,k_p}\right)'(0)=1/\sqrt[m_p]{g_p(k_p)}> 0,$$
 by Taylor series expansion we get
$$h^{-1}_{s_p,k_p}(\pm is)=k_p+\frac{\pm is}{\sqrt[m_p]{g_p(k_p)}}+r_\pm(is)$$ with $r_\pm(is)/s\to 0 $ as $s\to 0$. Hence, $\Im(h^{-1}_{s_p,k_p}(is))>0$ and $\Im(h^{-1}_{s_p,k_p}(-is))<0$, so $h^{-1}_{s_p,k_p}$ maps $\cN^+$ to the upper and $\cN^-$ to the lower half-plane.

\begin{prop}\label{solutions}
All solutions of $(k-k_p)^{m_p} g_p(k)= \nu^{m_p}$ in a neighborhood of $k_p$ are given by $$k= h^{-1}_{s_p,k_p}(e^{2\pi ip/{m_p}}\nu)$$ with  $p=0,...,m_p-1$.
\end{prop}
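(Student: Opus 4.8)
The plan is to exploit the fact that $h_{s_p,k_p}$ was built precisely so that raising it to the $m_p$-th power recovers the factor $(k-k_p)^{m_p}g_p(k)$. Since $g_p$ is analytic and non-vanishing near $k_p$ with $g_p(k_p)>0$, and the branch cut of $\sqrt[m_p]{\,\cdot\,}$ was chosen to avoid the values $g_p(k)$ for $k$ near $k_p$, the function $\sqrt[m_p]{g_p(k)}$ is single-valued and analytic on a neighbourhood $V$ of $k_p$ and satisfies $\bigl(\sqrt[m_p]{g_p(k)}\bigr)^{m_p}=g_p(k)$ there. Hence $h_{s_p,k_p}(k)^{m_p}=(k-k_p)^{m_p}g_p(k)$ on $V$, so on $V$ the equation $(k-k_p)^{m_p}g_p(k)=\nu^{m_p}$ is equivalent to $h_{s_p,k_p}(k)^{m_p}=\nu^{m_p}$, i.e.\ to $h_{s_p,k_p}(k)$ being an $m_p$-th root of $\nu^{m_p}$, namely $h_{s_p,k_p}(k)\in\{e^{2\pi ip/m_p}\nu:\ p=0,\dots,m_p-1\}$.

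Next I would invoke the local inverse constructed in the text preceding the statement. Shrinking $V$ if necessary, $h_{s_p,k_p}$ is a biholomorphism of $V$ onto a ball $\cN$ centred at $0=h_{s_p,k_p}(k_p)$, with inverse $h^{-1}_{s_p,k_p}$; this uses $h'_{s_p,k_p}(k_p)=\sqrt[m_p]{g_p(k_p)}\neq0$, which was already noted. If $|\nu|$ is smaller than the radius of $\cN$, then all $m_p$ roots $e^{2\pi ip/m_p}\nu$, having modulus $|\nu|$, lie in $\cN$. Applying $h^{-1}_{s_p,k_p}$ to the reformulated equation then shows that a point $k\in V$ solves $(k-k_p)^{m_p}g_p(k)=\nu^{m_p}$ if and only if $k=h^{-1}_{s_p,k_p}(e^{2\pi ip/m_p}\nu)$ for some $p\in\{0,\dots,m_p-1\}$; conversely, each of these $m_p$ values lies in $V$ and is a genuine solution, as one checks by applying $h_{s_p,k_p}$. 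Since $h^{-1}_{s_p,k_p}$ is injective, the $m_p$ values are pairwise distinct when $\nu\neq0$, which is consistent with $(k-k_p)^{m_p}g_p(k)-\nu^{m_p}$ having exactly $m_p$ zeros near $k_p$ counted with multiplicity (by Rouch\'e's theorem).

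I do not expect a real obstacle here: the content of the proposition is simply that the local chart $h_{s_p,k_p}$ conjugates the given equation to $w^{m_p}=\nu^{m_p}$. The only points requiring a little care are the bookkeeping already implicit in the construction above — namely choosing $V$ so small that $g_p(V)$ misses the chosen branch cut (possible because $g_p(k_p)$ lies on the positive real axis) and that $h_{s_p,k_p}$ is injective on $V$ — and then restricting to $\nu$ small enough that the whole orbit $\{e^{2\pi ip/m_p}\nu\}$ stays inside the ball $\cN$ on which $h^{-1}_{s_p,k_p}$ is defined.
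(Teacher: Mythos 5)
Your proof is correct and takes essentially the same route as the paper: the paper's argument is exactly the chain of equivalences $(k-k_p)^{m_p}g_p(k)=\nu^{m_p}\Leftrightarrow h_{s_p,k_p}(k)=e^{2\pi ip/m_p}\nu\Leftrightarrow k=h^{-1}_{s_p,k_p}(e^{2\pi ip/m_p}\nu)$, using the same local inverse of $h_{s_p,k_p}$. Your extra remarks (shrinking the neighbourhood, requiring $|\nu|$ small so all roots lie in $\cN$, and the Rouch\'e count) are just careful bookkeeping of what the paper leaves implicit.
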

\begin{proof} In a punctured neighborhood of $k_p$ we have the following equivalences:
\bea
(k-k_p)^{m_p} g_p(k)= \nu^{m_p} &
\Leftrightarrow& \left(\frac{\nu}{(k-k_p) \sqrt[m_p]{g_p(k)}}\right)^{m_p}= 1\\
&\Leftrightarrow& \frac{\nu}{(k-k_p) \sqrt[m_p]{g_p(k)}}= e^{-2\pi ip/m_p},\ p\in\{0,...,m_p-1\}\\
&\Leftrightarrow& k= h^{-1}_{s_p,k_p}(e^{2\pi ip/m_p}\nu),\ p\in\{0,...,m_p-1\}.
\eea
\end{proof}

For simplicity of notation, from now on we restrict ourselves
to the case when only one band, which we call the $s_0$-band, touches $\mu_1$.
All results generalise in the obvious way in the case when more than one band touches $\mu_1$.

Let $(s_0,k_p)$, $p=1,...,N$ be all pairs described in Proposition \ref{finite} with $m_p$ being the order of the first non-vanishing derivative of
$\lambda_{s_0}$ at $k_p$. Let $m$ be the lowest common multiple of the $m_p$ and $q_p=m/m_p$.
Using the Taylor expansion of $\lambda_{s_0}(k)$ around each of these $k_p$, we find a  complex   neighborhood $N(\mu_1)$ of $\mu_1$ and balls $B_R(k_1),...,B_R(k_N)$ of some radius $R$ around each of the $k_p$ such that by Proposition \ref{solutions} the equation $\lambda_{s_0}(k)= \mu_1+\mu^{m_p}$ has precisely $m_p$ solutions in $B_R(k_p)$ whenever  $ \mu_1+\mu^{m_p} \in N(\mu_1)$.
Moreover, we can find a smaller neighborhood $\widetilde{N}(\mu_1)$ of $\mu_1$ such that for each $i=1,...,N$ and all $\mu_1+\mu^{m_p} \in \widetilde{N}(\mu_1)$ all solutions  in $B_R(k_p)$   of $\lambda_{s_0}(k)=\mu_1+\mu^{m_p}$ in fact lie in  a ball $B_{R/3}(k_p)$ of radius $R/3$  around   $k_p$.

We now choose a contour $G$ as indicated in Figure \ref{fig2}.
\begin{figure}[hbt] 
\begin{center}
\begin{picture}(0,0)%
\includegraphics{FigContour.pstex}%
\put(-160, 10){\large\text{$B_R (k_{i})$}}
\put(-150, 160){\large\text{$B_{R/3}(k_{i})$}}
\put(0, 100){\text{$\pi$}}
\put(-380, 100){\text{$-\pi$}}
\put(-340, 140){\Large\text{$G$}}
\put(-242, 101){\circle*{5}}
\put(-250, 89){\large\text{$k_{i}$}}
\end{picture}%
\setlength{\unitlength}{4144sp}%
\begingroup\makeatletter\ifx\SetFigFont\undefined%
\gdef\SetFigFont#1#2#3#4#5{%
  \reset@font\fontsize{#1}{#2pt}%
  \fontfamily{#3}\fontseries{#4}\fontshape{#5}%
  \selectfont}%
\fi\endgroup%
\begin{picture}(5787,2883)(346,-2695)
\end{picture}%
\end{center}
\caption{The contour $G$.\label{fig2}}
\end{figure}

For later estimates, it will be useful to have a bound on the distance from the curve $G$ to solutions of the equation $\lambda_{s_0}(k)=z$. This is given by the following lemma.

\begin{lemma}\label{dist}
 There exists another neighborhood $\widehat{N}\subset \widetilde{N}(\mu_1)$ such that for some positive number $\delta_0$ we have \mbox{$\mathrm{dist} (G,\{ k\in\C:\lambda_{s_0}(k)=z\})\geq \delta_0$}  for all $z \in \widehat{N}$.
\end{lemma}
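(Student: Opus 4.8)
The plan is to exploit the geometry we have already set up: the contour $G$ coincides with the real interval $[-\pi,\pi]$ outside the small balls $B_{R/3}(k_p)$, and inside each ball it is a detour into the complex plane. I would argue by contradiction, or equivalently by a compactness-plus-continuity argument. Suppose no such uniform lower bound $\delta_0$ existed on any neighbourhood shrinking to $\mu_1$; then we could pick a sequence $z_j \to \mu_1$ and points $\zeta_j \in G$ with $\lambda_{s_0}(\zeta_j) = w_j$ for some solution $w_j$ with $\dist(\zeta_j, \{k : \lambda_{s_0}(k) = z_j\}) \to 0$, i.e.\ there are solutions $k_j$ of $\lambda_{s_0}(k) = z_j$ with $|\zeta_j - k_j| \to 0$. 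Since $G$ is compact, pass to a subsequence so that $\zeta_j \to \zeta_* \in G$ and hence $k_j \to \zeta_*$ as well; by continuity $\lambda_{s_0}(\zeta_*) = \lim z_j = \mu_1$, so $\zeta_* = k_p$ for some $p$ (these are \emph{all} the solutions of $\lambda_{s_0}(k) = \mu_1$ near the real axis, by Proposition \ref{finite} together with the localisation to the balls $B_R(k_p)$). But $k_p$ is an interior point of $B_{R/3}(k_p)$, and on a neighbourhood of $k_p$ the contour $G$ is a detour that avoids a neighbourhood of $k_p$ entirely — more precisely, $G \cap B_{R/3}(k_p)$ stays at a fixed positive distance from $k_p$ by construction of the contour. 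This contradicts $\zeta_j \to k_p$ with $\zeta_j \in G$.

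The one point that needs a little care is the claim that \emph{all} solutions of $\lambda_{s_0}(k) = z_j$ that could be close to $G$ must cluster near one of the $k_p$. Away from the balls $B_R(k_p)$, $G$ is the real interval, and on the compact set $[-\pi,\pi] \setminus \bigcup_p B_{R/3}(k_p)$ the continuous function $k \mapsto |\lambda_{s_0}(k) - \mu_1|$ is bounded below by some $c > 0$ (since its only real zeros are the $k_p$), so for $z_j$ close enough to $\mu_1$ there are no solutions there at all and none within distance, say, $R/3$ of that part of $G$; hence any solution near $G$ lies in some $B_{R/3}(k_p)$ and, by our earlier arrangement (the neighbourhood $\widetilde N(\mu_1)$), actually in $B_{R/3}(k_p)$, and we run the contradiction argument there. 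Inside $B_R(k_p)$ we have the explicit description from Proposition \ref{solutions}: writing $z_j = \mu_1 + \mu_j^{m_p}$, the solutions are $h^{-1}_{s_0,k_p}(e^{2\pi i \ell/m_p}\mu_j)$ for $\ell = 0,\dots,m_p-1$, each tending to $k_p$ as $\mu_j \to 0$, which confirms they all cluster at $k_p$ and none escapes to hit the straight part of $G$.

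The main obstacle — really the only substantive step — is pinning down that $G$ genuinely stays a fixed positive distance from each $k_p$; everything else is routine compactness. This is a property of the \emph{choice} of contour (Figure \ref{fig2}): inside $B_{R/3}(k_p)$ the contour is drawn to pass \emph{above} (or below) $k_p$, say along a fixed small arc, so there is a constant $d_p > 0$ with $\dist(k_p, G) \ge d_p$, and one takes the minimum over the finitely many $p$. With this in hand, choose $\widehat N \subset \widetilde N(\mu_1)$ small enough that for all $z \in \widehat N$ every solution of $\lambda_{s_0}(k) = z$ near $[-\pi,\pi]$ lies within distance $\tfrac12 \min_p d_p$ of some $k_p$ (possible by the explicit formula above and the fact that $h^{-1}_{s_0,k_p}$ is continuous with $h^{-1}_{s_0,k_p}(0) = k_p$); then for any such solution $k$ and any point $\zeta \in G$,
\[
|\zeta - k| \ge \dist(k_p, G) - |k - k_p| \ge \min_p d_p - \tfrac12 \min_p d_p = \tfrac12 \min_p d_p =: \delta_0 > 0,
\]
which is the desired bound. (Here one uses that a point $\zeta \in G$ within distance $\tfrac12\min_p d_p$ of $k$ would be within $\min_p d_p$ of $k_p$, forcing $\zeta$ into the detour part of $G$ near $k_p$, and the inequality above then applies; the straight part of $G$ is automatically far from $k$.) This gives $\dist(G, \{k \in \C : \lambda_{s_0}(k) = z\}) \ge \delta_0$ for all $z \in \widehat N$, as claimed.
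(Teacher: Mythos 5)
Your argument is correct and follows essentially the same route as the paper's proof: a contradiction/compactness argument that splits solutions of $\lambda_{s_0}(k)=z$ into those localised in the balls $B_{R/3}(k_p)$ (which cluster at the $k_p$, from which $G$ keeps a fixed positive distance by construction) and those away from the balls, which are excluded for $z$ near $\mu_1$ because $|\lambda_{s_0}-\mu_1|$ is bounded below on the compact real set $[-\pi,\pi]\setminus\bigcup_p B_{R/3}(k_p)$, the $k_p$ being the only real solutions of $\lambda_{s_0}(k)=\mu_1$. The paper phrases the second case as a sequential limit argument producing an extra real solution, while you give an equivalent quantitative bound; the ingredients and structure are the same.
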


 \begin{proof}
 For a contradiction, we assume that for all neighborhoods $\widehat{N}$ and for all $\delta_0>0$ there exists $z \in \widehat{N}$ such that $\mathrm{dist} (G,\{ k:\lambda_{s_0}(k)=z\}) < \delta_0$. This implies that there exists a sequence
 $(z_n)$ in $ \widetilde N(\mu_1)$ such that $z_n\to\mu_1$ and
 $$\mathrm{dist} (G,\{k:\lambda_{s_0}(k)=z_n\})\to 0$$ as $ n \to\infty$.  This in turn implies the existence of a sequence
 $(\widehat{k_n})$ such that $\lambda_{s_0}(\widehat{k_n})=z_n$ and $\mathrm{dist} (G, \{\widehat{k_n}\})\to 0$ as $n \to \infty$.

 Assume for another contradiction that $\widehat{k_n} \in B_R(k_p)$ for some $i$. Then by definition of $\widetilde N(\mu_1)$ we would have $\widehat{k_n} \in B_{R/3}(k_p)$ for large $n$. However, this contradicts that the $\widehat{k_n}$ are very close to the curve $G$ for large $n$. We can deduce that $\widehat{k_n} \notin B_R(k_p)$ and since $\mathrm{dist} (G, \{\widehat{k_n}\})\to 0$ we get that $\Im\ \widehat{k_n}\to 0$ as $ n\to \infty$.

 Now, $[-\pi,\pi]\backslash \cup (B_R(k_p)\cap\R)$ is compact, so there exists a subsequence $\widehat{k_{n_j}}$ converging to some  $\hat k \in [-\pi,\pi] \backslash \cup (B_R(k_p)\cap\R)$.
 It then follows that $\lambda_{s_0}(\widehat{k_{n_j}})\to \lambda_{s_0}(\hat k)$, and $z_{n_j}\to \mu_1$.  So $\lambda_{s_0}(\hat k)=\mu_1$.  Hence, $\hat k$ is an additional real solution to $\lambda_{s_0}(k)=\mu_1$ contradicting the assumption that the $k_p$ are all solutions.\end{proof}

We are now in a position to introduce the operator family that will provide an analytic extension of the resolvent.
Let $\cO$ be a small neighborhood of $0$ such that $\mu_1+\mu^m\in \widehat{N}$ for any $\mu\in \cO$. For $\mu\in \cO\setminus\{0\}$ and $r\in L^2_{\eps_0}((0,1)^2)$ we define the following operator family
\ben
\label{resolvent}\nonumber B_\mu(r)&=&\frac{1}{\sqrt{2 \pi}}\sum_{s\not = s_0} \int_{-\pi}^\pi(\lambda_s(k)-\mu_1-\mu^m)^{-1}P_s(k,r)dk
\\
&& \ +\ \frac{1}{\sqrt{2 \pi}}\int_{G} (\lambda_{s_0}(k)-\mu_1-\mu^m)^{-1}P_{s_0}(k,r)dk\\
&& \ +\  \sqrt{2 \pi} i\sum_{i=1}^N \sum_{p\in P_i} \lambda'_{s_0}(h^{-1}_{s_0,k_i}(e^{2\pi ip/m_i}\mu^{q_i}))^{-1}P_{s_0}(h^{-1}_{s_0,k_i}(e^{2\pi ip/m_i}\mu^{q_i}),r).\nonumber
\een
where, $P_i=\{0,...,\frac{m_i}{2}-1\}$.
The next result shows that $B_\mu$ does indeed extend the resolvent.
\begin{prop}
For $\mu\in {\cO} \setminus\{0\}$  with $\arg \mu\in(0,2\pi/m)$, we have $$B_\mu(r)= (L_0(k_x) -\mu_1-\mu^m)^{-1}r$$ for $r\in L^2_{\eps_0}((0,1)^2)$
\end{prop}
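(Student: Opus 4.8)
The plan is to show that the contour-deformation formula \eqref{resolvent} agrees with the known resolvent representation \eqref{eq:res2} on the sector $\arg\mu\in(0,2\pi/m)$. The only terms that differ between the two formulas are those involving the $s_0$-band, so it suffices to prove that for such $\mu$,
\[
\int_{-\pi}^\pi \frac{P_{s_0}(k,r)}{\lambda_{s_0}(k)-\mu_1-\mu^m}\,dk
= \int_{G}\frac{P_{s_0}(k,r)}{\lambda_{s_0}(k)-\mu_1-\mu^m}\,dk
+ 2\pi i\sum_{i=1}^N\sum_{p\in P_i}\frac{P_{s_0}(h^{-1}_{s_0,k_i}(e^{2\pi ip/m_i}\mu^{q_i}),r)}{\lambda_{s_0}'(h^{-1}_{s_0,k_i}(e^{2\pi ip/m_i}\mu^{q_i}))}.
\]
I would obtain this by the residue theorem applied to the meromorphic function $k\mapsto (\lambda_{s_0}(k)-\mu_1-\mu^m)^{-1}P_{s_0}(k,r)$ on the region enclosed between the real segment $[-\pi,\pi]$ and the contour $G$.

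\textbf{Step 1: locate the poles.} For $\mu$ small, the zeros of $\lambda_{s_0}(k)-\mu_1-\mu^m$ near the real axis lie in the balls $B_{R/3}(k_i)$ by the definition of $\widetilde N(\mu_1)$, and by Lemma~\ref{dist} they stay at distance $\ge\delta_0$ from $G$, so they lie strictly \emph{between} $[-\pi,\pi]$ and $G$. Near each $k_i$, $\lambda_{s_0}(k)-\mu_1-\mu^m = (k-k_i)^{m_i}g_i(k)-\mu^m = (k-k_i)^{m_i}g_i(k) - (\mu^{q_i})^{m_i}$, so Proposition~\ref{solutions} (with $\nu=\mu^{q_i}$) gives exactly $m_i$ simple zeros $k=h^{-1}_{s_0,k_i}(e^{2\pi ip/m_i}\mu^{q_i})$, $p=0,\dots,m_i-1$. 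Away from the $k_i$, i.e.\ on $[-\pi,\pi]\setminus\bigcup B_R(k_i)$ and its small complex neighborhood, $\lambda_{s_0}(k)-\mu_1$ is bounded away from zero (else there would be another real solution of $\lambda_{s_0}=\mu_1$, contradicting Proposition~\ref{finite}), so for $\mu$ small there are no further poles in the relevant region. The zeros are simple because $h'_{s_0,k_i}(k_i)\ne 0$, so the residue of the integrand at such a zero $k$ is $P_{s_0}(k,r)/\lambda_{s_0}'(k)$.

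\textbf{Step 2: decide which zeros are enclosed.} Since $\arg\mu\in(0,2\pi/m)$, we have $\arg(\mu^{q_i})\in(0,2\pi q_i/m)=(0,2\pi/m_i)$, so $\arg(e^{2\pi ip/m_i}\mu^{q_i})\in(2\pi p/m_i,2\pi(p+1)/m_i)$; in particular $\mathrm{Im}(e^{2\pi ip/m_i}\mu^{q_i})>0$ for $p=0,\dots,\frac{m_i}{2}-1$ and $<0$ for $p=\frac{m_i}{2},\dots,m_i-1$. By the computation preceding Proposition~\ref{solutions}, $h^{-1}_{s_0,k_i}$ maps $\cN^+$ to the upper half-plane and $\cN^-$ to the lower half-plane, so the enclosed zeros — those between the real axis and $G$, where $G$ dips \emph{below} into $B_{R/3}(k_i)$ as in Figure~\ref{fig2} — are precisely those with $p\in P_i=\{0,\dots,\frac{m_i}{2}-1\}$. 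Here I would note that $m_i$ is even (Proposition~\ref{finite}, via the Thomas argument: $\mu_1$ is a band minimum so the first nonvanishing derivative has even order), so $P_i$ is well-defined, and the orientation of $G$ contributes the factor $+2\pi i$ (with the $\frac{1}{\sqrt{2\pi}}$ prefactor cancelling against the $\sqrt{2\pi}$ in the third line of \eqref{resolvent}).

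\textbf{Step 3: justify the contour shift itself.} On the parts of $G$ that coincide with the real axis outside $\bigcup B_R(k_i)$ there is nothing to do; on the parts inside the balls, the integrand is analytic on the closed region between $[-\pi,\pi]\cap B_R(k_i)$ and the arc of $G$ in $B_R(k_i)$ except at the enclosed poles, so Cauchy's theorem applies. The endpoints $\pm\pi$ require the integrand and $P_{s_0}$ to match up under the quasi-periodicity; this is where I would invoke the analyticity of $P_{s_0}(\cdot,r)$ in a complex neighborhood of $[-\pi,\pi]$, established earlier, together with the fact that near $\pm\pi$ the function $\lambda_{s_0}(k)-\mu_1$ is bounded away from $0$ so no pole sits there. \textbf{The main obstacle} is the bookkeeping of Step 2 — correctly pinning down, via the local behaviour of $h^{-1}_{s_0,k_i}$ on $\cN^\pm$ and the shape of $G$ in Figure~\ref{fig2}, exactly which of the $m_i$ roots are swept across when the contour is pushed off the real line, and checking that the orientation gives $+2\pi i$ rather than $-2\pi i$; everything else is a routine residue computation once the poles and the region are correctly identified. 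Finally, having shown $B_\mu(r)$ equals the series \eqref{eq:res2} on the sector, and since \eqref{eq:res2} represents $(L_0(k_x)-\mu_1-\mu^m)^{-1}r$ whenever $\mu_1+\mu^m\notin\sigma(L_0(k_x))$ — which holds on the sector since then $\mu_1+\mu^m$ has positive imaginary part or lies in the gap — the identity $B_\mu(r)=(L_0(k_x)-\mu_1-\mu^m)^{-1}r$ follows.
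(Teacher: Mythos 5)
Your argument is essentially the paper's own proof: deform the $s_0$-integral from $[-\pi,\pi]$ to $G$ via the residue theorem, identify the crossed zeros through Proposition \ref{solutions} and the fact that $h^{-1}_{s_0,k_i}$ maps $\cN^{\pm}$ to the upper/lower half-plane, and select $p\in P_i$ from the condition $\arg\bigl(e^{2\pi ip/m_i}\mu^{q_i}\bigr)\in(0,\pi)$, using $\mathrm{Res}\, f/g=f(z)/g'(z)$ at the simple zeros. One small correction: in Figure \ref{fig2} the contour $G$ arcs \emph{above} the real axis near each $k_i$, which is exactly why the enclosed poles are the upper-half-plane roots $p\in P_i$ with orientation factor $+2\pi i$; with the geometry as you describe it (``$G$ dips below''), the enclosed roots and the sign would both come out opposite, so only the conclusion, not that parenthetical description, matches the paper.
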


\begin{proof}
Noting the resolvent formula \eqref{eq:res1}, we see that $B_\mu$ is obtained from the resolvent by replacing integration over $(-\pi,\pi)$ with integration over $G$ when $s=s_0$  and adding the final term with the point evaluation.
Hence, this amounts to a simple application of the residue theorem:  We need to find the solutions to
$\lambda_{s_0}(k)=\mu_1+\mu^m$, i.e.~to  $(k-k_i)^{m_i} g(k)= \mu^{m}$, lying between $G$ and the real axis.
Since $h^{-1}_{s_0,k_i}$ maps points in the upper half-plane to points in the upper half-plane and points in the lower half-plane to  points in the lower half-plane,  by Proposition \ref{solutions} and the discussion following it, these are given by $k= h^{-1}_{s_0,k_i}(e^{2\pi ip/m_i}\mu^{m/m_i})$ for those $p\in\{0,...,m_i-1\}$ such that $e^{2\pi ip/m_i}\mu^{m/m_i}$ lies in the upper half plane. Now for $\arg \mu\in(0,2\pi/m)$,
$$ \arg \left(e^{2\pi ip/m_i}\mu^{m/m_i}\right) \in  \left(2\pi \frac{p}{m_i},2\pi \frac{p}{m_i}+\frac{2\pi}{m_i}\right).  $$
This is in $(0,\pi)$ if and only if $p\in P_i$.
Note that in calculating the residue, we have used
that $\mathrm{Res}_z f/g = f(z)/g'(z)$ whenever $g(z)=0,\;g'(z)\neq 0$.\end{proof}

\subsection{Properties of the operator family $B_\mu$}

We next look at properties of the operator family $B_\mu$ from \eqref{resolvent} with $\mu$ in the small neighborhood $\cO$ of $0$. Recall that our aim is to be able to apply Fredholm theory to this operator family.

\begin{prop}
$\mu\mapsto B_\mu$ is analytic as a map from $\cO\setminus \{0\}$ to $\cL(L^2_{\eps_0}((0,1)^2))$.
\end{prop}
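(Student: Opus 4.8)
The plan is to show analyticity term by term in the three-part formula \eqref{resolvent}. Each summand is an operator-valued integral (or finite sum of point evaluations) involving the analytic Bloch-data $P_s(k,\cdot)$ and rational-in-$\mu$ factors; analyticity will follow from uniform convergence of the series together with analyticity of each term, so the bulk of the work is establishing the estimates that let us differentiate under the integral and sum signs.

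\textbf{Step 1: the regular series.} Consider the first sum $\frac{1}{\sqrt{2\pi}}\sum_{s\neq s_0}\int_{-\pi}^{\pi}(\lambda_s(k)-\mu_1-\mu^m)^{-1}P_s(k,r)\,dk$. For $\mu\in\cO$ we have $\mu_1+\mu^m\in\widehat N\subset\widetilde N(\mu_1)$, and for $s\neq s_0$ the band function $\lambda_s$ never equals $\mu_1$ (only the $s_0$-band touches $\mu_1$), so $\dist(\lambda_s([-\pi,\pi]),\widehat N)\geq c>0$ uniformly in $s$ after shrinking $\cO$ — here I would invoke Proposition \ref{finite} plus the fact that $\inf_k\lambda_s(k)\to\infty$ (shown in its proof) to get a lower bound on $|\lambda_s(k)-\mu_1-\mu^m|$ that grows with $s$. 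Combined with the bound $\|P_s(k,r)\|\le \frac{1}{\sqrt{2\pi}}\|r\|_{\eps_0}$ (from \eqref{eq:Pskr1} and $\|\psi_s(\cdot,k)\|_{\eps_0}=1$) and $\sum_s|\langle r,\psi_s(\cdot,\bar k)\rangle_{\eps_0}|^2<\infty$, the series converges in operator norm uniformly for $\mu$ in compact subsets of $\cO$. Each term is analytic in $\mu$ since $(\lambda_s(k)-\mu_1-\mu^m)^{-1}$ is (denominator nonvanishing) and the integrand is jointly continuous; a Morera/Cauchy-integral argument, or differentiation under the integral using the uniform bound, gives analyticity of each term, and the uniform limit of analytic $\cL(L^2_{\eps_0})$-valued functions is analytic.

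\textbf{Step 2: the contour integral.} For the second term $\frac{1}{\sqrt{2\pi}}\int_G(\lambda_{s_0}(k)-\mu_1-\mu^m)^{-1}P_{s_0}(k,r)\,dk$, the key point is precisely Lemma \ref{dist}: for $z=\mu_1+\mu^m\in\widehat N$ we have $\dist(G,\{k:\lambda_{s_0}(k)=z\})\ge\delta_0$, so $|\lambda_{s_0}(k)-\mu_1-\mu^m|$ is bounded below uniformly for $k\in G$ and $\mu\in\cO$ (using that $\lambda_{s_0}$ is Lipschitz on the bounded curve $G$, the set of zeros of $\lambda_{s_0}(\cdot)-z$ staying $\delta_0$-away forces a lower bound on the modulus). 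Hence the integrand is analytic in $\mu$ and bounded uniformly on the compact contour $G$, so differentiation under the integral sign is justified and this term is analytic; $P_{s_0}(k,\cdot)$ being analytic in $k$ and continuous in $(k,r)$ is already recorded in the excerpt.

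\textbf{Step 3: the point-evaluation terms, and the main obstacle.} The finite sum $\sqrt{2\pi}\,i\sum_{i=1}^N\sum_{p\in P_i}\lambda'_{s_0}(h^{-1}_{s_0,k_i}(e^{2\pi ip/m_i}\mu^{q_i}))^{-1}P_{s_0}(h^{-1}_{s_0,k_i}(e^{2\pi ip/m_i}\mu^{q_i}),r)$ is the delicate one: it is a composition of the branch $\mu\mapsto\mu^{q_i}$, the analytic inverse $h^{-1}_{s_0,k_i}$ (analytic on the ball $\cN$ centered at $0$, as established before Proposition \ref{solutions}), the analytic map $k\mapsto P_{s_0}(k,\cdot)$, and the reciprocal of $\lambda'_{s_0}\circ h^{-1}_{s_0,k_i}$. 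The obstacle is showing the factor $\lambda'_{s_0}(h^{-1}_{s_0,k_i}(\cdot))^{-1}$ is analytic near $\mu=0$, where $\lambda'_{s_0}$ vanishes (to order $m_i-1$ at $k_i$): I would use the factorization $\lambda_{s_0}(k)-\mu_1=(k-k_i)^{m_i}g_i(k)$ with $g_i(k_i)>0$ to compute $\lambda'_{s_0}(k)=(k-k_i)^{m_i-1}\bigl(m_i g_i(k)+(k-k_i)g_i'(k)\bigr)$, and then observe that $h^{-1}_{s_0,k_i}(\nu)-k_i = \nu/\sqrt[m_i]{g_i(k_i)}+O(\nu^2)$ vanishes to exactly first order in $\nu$, so that $\lambda'_{s_0}(h^{-1}_{s_0,k_i}(\nu))$ vanishes to exactly order $m_i-1$ in $\nu$; substituting $\nu=e^{2\pi ip/m_i}\mu^{q_i}$ makes it vanish to order $(m_i-1)q_i$ in $\mu$, while the full term carries no compensating zero — wait, this would be a pole, not removable. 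The resolution, which I would make precise, is that one does \emph{not} claim analyticity at $\mu=0$: the proposition asserts analyticity only on $\cO\setminus\{0\}$, and on that punctured neighborhood $h^{-1}_{s_0,k_i}(e^{2\pi ip/m_i}\mu^{q_i})\neq k_i$ so $\lambda'_{s_0}$ there is nonzero, making the reciprocal analytic; thus each point-evaluation term is a composition of analytic maps on $\cO\setminus\{0\}$. Assembling Steps 1–3, $B_\mu$ is a locally uniformly convergent sum of $\cL(L^2_{\eps_0}((0,1)^2))$-valued analytic functions on $\cO\setminus\{0\}$, hence analytic there. $\qed$
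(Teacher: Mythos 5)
Your proof is correct and follows essentially the same route as the paper: the same three-term decomposition of \eqref{resolvent}, with the residue term handled by observing that $k_i$ is locally the only zero of $\lambda'_{s_0}$ (so that term is analytic precisely away from $\mu=0$), the contour term by the uniform bound on $|(\lambda_{s_0}(k)-\mu_1-\mu^m)^{-1}|$ coming from Lemma \ref{dist}, and the infinite sum by the uniform bound $|\lambda_s(k)-\mu_1-\mu^m|\geq\eta/2$ for the non-touching bands together with completeness of the Bloch functions. The only (immaterial) difference is that you assert uniform operator-norm convergence of the series (which indeed follows from the Parseval structure you invoke, not from the termwise bounds alone), whereas the paper establishes uniform convergence after testing against a function $\phi$, i.e.\ weak analyticity; both yield analyticity of the limit.
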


\begin{proof} We discuss the three terms on the right hand side of \eqref{resolvent} separately.

(1) The only problem in the residue term arises when $\lambda'_{s_0}(h^{-1}_{s_0,k_i}(e^{2\pi ip/m_i}\mu^{q_i}))=0$. Now,
$\mu=0$ is locally the only solution of $ h^{-1}_{s_0,k_i}(\mu)=k_i$ and, since
$$\lambda'_{s_0}(k)=(k-k_i)^{m_i-1} \underbrace{[m_ig(k)+(k-k_i)g'(k)]}_{\neq 0 \hbox{ for } k=k_i} ,$$
$k_i$ is locally the only zero of $k\mapsto \lambda'_{s_0}(k)$.  Therefore, the residue term is analytic for $\mu\in \cO\setminus\{0\}$.

(2) We next consider the contour integral
 \beq\label{contour}\frac{1}{\sqrt{2 \pi}} \int_{G} (\lambda_{s_0}(k)-\mu_1-\mu^m)^{-1} P_{s_0}(k,r) dk.\eeq
Lemma \ref{dist} implies that along the contour $G$ we have that $|(\lambda_s(k)-\lambda_{s_0}-\mu^m)^{-1}|$ is bounded.  In particular, this implies that the integrand in \eqref{contour} is analytic in $\mu$.

(3) Finally, we consider the infinite sum component in the resolvent. We note that by an argument as in the proof of Proposition \ref{finite}, there exists $\eta >0$ such that for all  band functions  $\lambda_s$   which do not touch
$\mu_1$ and for all $k$,  we have $|\lambda_s(k)-\mu_1|\geq \eta$. So there
exists a neighborhood $\cO$ of the origin such that for all $\mu \in \cO$ and for all non-touching bands $s$ we have $|\lambda_s(k)-\mu_1-\mu^m|\geq \frac{\eta}{2}$ for all $k$.
Hence, $|(\lambda_s(k)-\mu_1-\mu^m)^{-1}|$ is uniformly bounded in $s$ and $\mu$. We need to test with a function $\phi$ and show that for any $\phi$ the resulting function is analytic in $\mu$. Fubini's Theorem allows us to interchange the integration in $k$ and the $L^2$-scalar products. Then we have for $\mu \in \cO$ and $M\geq N$ sufficiently large, 
 \begin{eqnarray*}
 && \sum_{s=N}^M\int_{-\pi}^\pi\frac{1}{\lambda_s(k)-\mu_1-\mu^m} \llangle Ur(\cdot,k),\psi_s(\cdot,k)\rrangle_{\eps_0}\llangle\psi_s(\cdot,k),\phi\rrangle_{\eps_0} dk \\
 && \leq \sum_{s=N}^M\int_{-\pi}^\pi\frac{2}{\eta} | \llangle Ur(\cdot,k),\psi_s(\cdot,k)\rrangle_{\eps_0}|\ |\llangle\psi_s(\cdot,k),\phi\rrangle_{\eps_0}|\ dk \\
 &&  \leq
 \frac{2}{\eta}\sqrt{ \sum_{s=N}^M\int_{-\pi}^\pi | \llangle Ur(\cdot,k),\psi_s(\cdot,k)\rrangle|_{\eps_0}^2 dk}\sqrt{  \sum_{s=N}^M \int_{-\pi}^\pi | \llangle\psi_s(\cdot,k),\phi\rrangle|_{\eps_0}^2 dk}.
 \end{eqnarray*}
By completeness of the Bloch functions (see \eqref{complete}), this tends to $0$ as $N,M\to\infty$.
Hence we get uniform convergence of the series in $\mu$ and the limit function is analytic.\end{proof}

\begin{prop}
$B_\mu$ is compact for $\mu\in \cO\setminus \{0\}$.
\end{prop}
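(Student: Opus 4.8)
The plan is to show that each of the three summands in \eqref{resolvent} defines a compact operator on $L^2_{\eps_0}((0,1)^2)$; since a finite sum of compact operators is compact, this proves the proposition.

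The third (residue) term is finite rank: by \eqref{eq:Pskr1} each operator $r\mapsto P_{s_0}(k,r)$ has rank one, and the residue term is a finite linear combination of such operators, evaluated at the finitely many complex points $h^{-1}_{s_0,k_i}(e^{2\pi ip/m_i}\mu^{q_i})$. For the contour-integral term I would use Lemma \ref{dist}, which guarantees that $(\lambda_{s_0}(k)-\mu_1-\mu^m)^{-1}$ is bounded on the compact curve $G$; since the Bloch functions depend analytically, hence continuously, on $k$ (also on the complex neighbourhood of $[-\pi,\pi]$ through which $G$ runs), the map $k\mapsto(\lambda_{s_0}(k)-\mu_1-\mu^m)^{-1}P_{s_0}(k,\cdot)$ is a norm-continuous family of rank-one — in particular compact — operators on the compact set $G$, so its integral is the operator-norm limit of its finite-rank Riemann sums and hence compact. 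The same reasoning shows that for each fixed $s\neq s_0$ the operator $r\mapsto\int_{-\pi}^\pi(\lambda_s(k)-\mu_1-\mu^m)^{-1}P_s(k,r)\,dk$ is compact (for $\mu$ small the denominator never vanishes, by the bound $|\lambda_s(k)-\mu_1-\mu^m|\ge\eta/2$ established in the proof that $\mu\mapsto B_\mu$ is analytic).

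The only term requiring genuine care is the infinite sum $B_\mu^{(1)}:=\frac{1}{\sqrt{2\pi}}\sum_{s\neq s_0}\int_{-\pi}^\pi(\lambda_s(k)-\mu_1-\mu^m)^{-1}P_s(k,\cdot)\,dk$. Its partial sums are compact by the previous paragraph, so it is enough to prove that the series converges in the \emph{operator norm} of $\cL(L^2_{\eps_0}((0,1)^2))$ — the estimate used for analyticity only yields strong convergence, so a sharper argument is needed here. Writing $z:=\mu_1+\mu^m$, I would pass to the Floquet--Bloch representation in the $y$-variable, in which $L_0(k_x)$ acts as multiplication by $\lambda_s(k)$ on the $s$-th component, so the tail $\sum_{s>M,\,s\neq s_0}\int_{-\pi}^\pi(\lambda_s(k)-z)^{-1}P_s(k,\cdot)\,dk$ is just multiplication by $(\lambda_s(k)-z)^{-1}$ restricted to the components of index $>M$; as an operator on $L^2_{\eps_0}(\Omega)$ this tail therefore has norm $\sup_{s>M,\,s\neq s_0}\sup_k|\lambda_s(k)-z|^{-1}$, which tends to $0$ as $M\to\infty$ because $\inf_k\lambda_s(k)\to\infty$ as $s\to\infty$ by Proposition \ref{finite}. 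Extending data by zero to $\Omega$ and restricting the output to $(0,1)^2$ changes operator norms only by a fixed constant ($\eps_0$ being bounded and bounded away from zero), so the tails of $B_\mu^{(1)}$ tend to zero in $\cL(L^2_{\eps_0}((0,1)^2))$; hence $B_\mu^{(1)}$ is an operator-norm limit of compact operators and is compact.

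I expect the norm-convergence of this last sum to be the only real obstacle: the bound used for analyticity is deliberately crude and gives merely strong convergence, so one must instead exploit that the band functions recede to $+\infty$, i.e.\ Proposition \ref{finite}. A softer alternative for the whole operator, which I would mention, is to use analyticity: on the sector $\{\mu\in\cO\setminus\{0\}:0<\arg\mu<2\pi/m\}$ one has $\mu_1+\mu^m\notin\sigma(L_0(k_x))$ (for $\cO$ small, since $\mu_1$ sits at the top of the gap $(\mu_0,\mu_1)$ and is the bottom of the $s_0$-band), so there $B_\mu$ coincides with $(L_0(k_x)-\mu_1-\mu^m)^{-1}$ applied to data supported in $[0,1]^2$, which is compact; since $\mu\mapsto B_\mu$ is analytic on the connected set $\cO\setminus\{0\}$, composing with the quotient map onto the Calkin algebra and invoking the identity theorem for Banach-space-valued analytic functions propagates compactness from the sector to all of $\cO\setminus\{0\}$.
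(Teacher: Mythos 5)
Your proof is correct, but for the two non-trivial terms it takes a genuinely different route from the paper. For the contour integral and for each single band $s\neq s_0$ you argue that the integrand is a norm-continuous family of rank-one operators over a compact curve, so the integral is an operator-norm limit of finite-rank Riemann sums; and for the infinite sum you obtain \emph{operator-norm} convergence by viewing the truncated resolvent in the Floquet--Bloch fibres as multiplication by $(\la_s(k)-z)^{-1}$ on the components $s>M$, whose norm is $\sup_{s>M,k}|\la_s(k)-z|^{-1}\to 0$ because $\inf_k\la_s(k)\to\infty$ (as in the proof of Proposition \ref{finite}). The paper instead proves, for both the contour term and the partial sums of the series, a uniform bound in $H^1((0,1)^2)$ of the output by $C\norm{r}_{\eps_0}$, using Lemma \ref{dist}, the gradient estimates \eqref{psi_s_estimate} and the fibrewise $H^1$-orthogonality of the $\psi_s(\cdot,k)$, and then concludes compactness from the compact embedding $H^1((0,1)^2)\hookrightarrow L^2((0,1)^2)$; the series is shown to be Cauchy in $H^1$ via \eqref{complete}. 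Your observation that the estimate used for analyticity only gives weak/strong convergence, so that something sharper is needed for compactness, is exactly the right diagnosis; your fix (norm convergence from the diagonal representation) is more elementary in that it avoids Sobolev embeddings and the Bloch gradient estimates, while the paper's route additionally yields a smoothing ($L^2\to H^1$) bound that is uniform in $\mu$. Your ``softer alternative'' via the Calkin algebra is also sound: it uses only the already-established analyticity of $\mu\mapsto B_\mu$, the identification of $B_\mu$ with the resolvent on the sector $\arg\mu\in(0,2\pi/m)$ (where $\mu_1+\mu^m$ avoids $\sigma(L_0(k_x))$), the compactness of the resolvent on data supported in $[0,1]^2$ after restriction to $(0,1)^2$, and the identity theorem for Banach-space-valued analytic functions on the connected set $\cO\setminus\{0\}$; it is shorter, at the price of more abstract machinery and of giving no quantitative information about $B_\mu$ off the sector.
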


\begin{proof}
Again, we consider the three terms on the r.h.s.~of \eqref{resolvent} separately.

(1) The term from the residue is clearly compact, as it is a finite rank operator.

(2) We next consider the contour integral \eqref{contour}. Lemma
\ref{dist} implies that along the contour $G$ we have that
$|(\lambda_{s_0}(k)-\mu_1-\mu^m)^{-1}|$ is bounded and the same is true of $\norm{ \psi_{s_0}(\cdot,k)}$. Using \eqref{eq:Pskr1}
 and taking the norm into the integral,
we can now estimate
\begin{eqnarray*}
&\norm{ \int_{G} (\lambda_{s_0}(k)-\mu_1-\mu^m)^{-1}P_{s_0}(k,r)(\cdot)dk}_{H^1((0,1)^2)}\\
&\leq C \int_{G}\frac{ \norm{ \psi_{s_0}(\cdot,k)}_{H^1((0,1)^2)}}{|\lambda_{s_0}(k)-\mu_1-\mu^m|} | \llangle r,\psi_{s_0}(\cdot,k)\rrangle_{\eps_0}| dk \\
&\leq C \int_{G}\frac{ \sqrt{\lambda_{s_0}(k)}+1}{|\lambda_{s_0}(k)-\mu_1-\mu^m|}  | \llangle r,\psi_{s_0}(\cdot,k)\rrangle_{\eps_0}| dk
&\leq C\norm{r}_{\eps_0}.
\end{eqnarray*}
where the second inequality follows using \eqref{psi_s_estimate}. Hence, as an operator in $L^2_{\eps_0}((0,1)^2)$, this part of the resolvent is compact.

(3)
Finally, we consider the infinite sum component in the resolvent. We note that, as before, there exists $\eta >0$ such that for all $\mu \in \cO$ and for all non-touching bands $s$ we have $|\lambda_s(k)-\mu_1-\mu^m|\geq \frac{\eta}{2}$ for all $k$.

Hence, $|(\lambda_s(k)-\mu_1-\mu^m)^{-1}|$ is uniformly bounded in $s$ and $\mu$ and
\begin{eqnarray*}
 &&\norm{ \sum_{s=N}^M\int_{-\pi}^\pi (\lambda_s(k)-\mu_1-\mu^m)^{-1} P_s(k,r)(:) dk}^2_{H^1}\\
 &=&\norm{ \int_{-\pi}^\pi \sum_{s=N}^M ( \lambda_s(k)-\mu_1-\mu^m)^{-1} \llangle Ur(\cdot,k),\psi_s(\cdot,k)\rrangle_{\eps_0}\psi_s(:,k)dk}^2_{H^1}\\
  &\leq& C\int_{-\pi}^\pi\norm{\sum_{s=N}^M(\lambda_s(k)-\mu_1-\mu^m)^{-1}\llangle Ur(\cdot,k),\psi_s(\cdot,k)\rrangle_{\eps_0}\psi_s(:,k)}_{H^1}^2 dk \\
    &\leq&C\int_{-\pi}^\pi\sum_{s=N}^M\frac{\norm{\psi_s(:,k)}_{H^1}^2}{(\lambda_s(k)-\mu_1-\mu^m)^2}|\llangle Ur(\cdot,k),\psi_s(\cdot,k)\rrangle_{\eps_0}|^2 dk \\
  &\leq&C\int_{-\pi}^\pi\sum_{s=N}^M\frac{\left(\sqrt{\lambda_s(k)}+1\right)^2}{(\lambda_s(k)-\mu_1-\mu^m)^2}|\llangle Ur(\cdot,k),\psi_s(\cdot,k)\rrangle_{\eps_0}|^2 dk \\
 &\leq& C \sum_{s=N}^M\int_{-\pi}^\pi|\llangle Ur(\cdot,k),\psi_s(\cdot,k)\rrangle_{\eps_0}|^2 dk\to 0
 \end{eqnarray*}
as $M,N\to\infty$. In the third line we have used the $H^1$-orthogonality of the eigenfunctions
$\{\psi_s(\cdot, k) \}_{s\in \N}$ for fixed $k$. Thus, we have a
Cauchy sequence in $H^1((0,1)^2)$. Moreover, the $H^1$-norm of the limit is bounded by $C\norm{r}_{\eps_0}$ which gives compactness as an operator in $L^2_{\eps_0}((0,1)^2)$.\end{proof}

\begin{prop}
$B_\mu$ has a pole of finite rank for $\mu=0$.
\end{prop}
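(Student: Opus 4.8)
The plan is to examine the three terms in the defining formula \eqref{resolvent} for $B_\mu$ and show that near $\mu=0$ the singular part has finite rank. First I would observe that the infinite-sum term (the first term) and the contour integral over $G$ (the second term) are both analytic at $\mu=0$: for the sum this follows because the non-touching bands satisfy $|\lambda_s(k)-\mu_1|\geq\eta$ so that $(\lambda_s(k)-\mu_1-\mu^m)^{-1}$ is uniformly bounded for $\mu$ near $0$ (exactly as in the analyticity proof above, with $\mu=0$ now admissible), and for the contour integral it follows from Lemma \ref{dist}, which gives $\dist(G,\{k:\lambda_{s_0}(k)=\mu_1\})\geq\delta_0>0$, so $(\lambda_{s_0}(k)-\mu_1-\mu^m)^{-1}$ stays bounded along $G$ even at $\mu=0$. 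Hence these two terms contribute a holomorphic $\cL(L^2_{\eps_0}((0,1)^2))$-valued function on a full neighborhood of $0$, and in particular no pole.

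The only possibly singular contribution is therefore the residue term
$$\sqrt{2\pi}\,i\sum_{i=1}^N\sum_{p\in P_i}\lambda'_{s_0}\!\left(h^{-1}_{s_0,k_i}(e^{2\pi ip/m_i}\mu^{q_i})\right)^{-1}P_{s_0}\!\left(h^{-1}_{s_0,k_i}(e^{2\pi ip/m_i}\mu^{q_i}),r\right).$$
Next I would analyze each summand. As noted in the analyticity proof, $\lambda'_{s_0}(k)=(k-k_i)^{m_i-1}[m_ig_i(k)+(k-k_i)g_i'(k)]$ with the bracket nonzero at $k_i$; and $h^{-1}_{s_0,k_i}(w)-k_i$ vanishes to first order in $w$, so $h^{-1}_{s_0,k_i}(e^{2\pi ip/m_i}\mu^{q_i})-k_i$ behaves like a nonzero constant times $\mu^{q_i}$. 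Therefore $\lambda'_{s_0}(h^{-1}_{s_0,k_i}(e^{2\pi ip/m_i}\mu^{q_i}))^{-1}$ has a pole in $\mu$ of order exactly $q_i(m_i-1)=m-q_i$ at $\mu=0$, while $P_{s_0}(h^{-1}_{s_0,k_i}(\cdot),r)$ is, by \eqref{eq:Pskr1} and the analyticity of $k\mapsto\psi_{s_0}(\cdot,\overline k)$, an analytic $L^2_{\eps_0}((0,1)^2)$-valued function of $\mu$ near $0$. Expanding both factors in Laurent/Taylor series and multiplying, each summand is a Laurent series in $\mu$ whose principal part has finitely many terms, and each coefficient of a negative power of $\mu$ is of the form (scalar functional of $r$)$\times$(fixed vector in $L^2_{\eps_0}((0,1)^2)$) — an operator of rank one. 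Summing over the finitely many $i$ and $p\in P_i$, the principal part of $B_\mu$ at $\mu=0$ is an operator-valued Laurent polynomial with finitely many terms, each of finite rank, hence $B_\mu$ has a pole of finite rank at $\mu=0$.

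I expect the main obstacle to be bookkeeping the order of the pole and verifying that the scalar coefficient functionals are genuinely continuous (bounded) on $L^2_{\eps_0}((0,1)^2)$, i.e.\ that the Laurent coefficients really are \emph{bounded} finite-rank operators and not merely algebraically finite-rank; this is where one uses that $r\mapsto\langle r,\psi_{s_0}(\cdot,\overline k)\rangle_{\eps_0}$ is bounded uniformly for $k$ in the relevant compact neighborhoods of the $k_i$, together with the gradient bound \eqref{psi_s_estimate} if an $H^1$ estimate is wanted. A minor point also worth stating explicitly is that the branch-cut choices for the roots in $h_{s_p,k_p}$ and for $\mu^{q_i}$ are consistent on the punctured neighborhood $\cO\setminus\{0\}$ so that each summand is single-valued there; since we only claim a pole (a statement about the Laurent expansion on $\cO\setminus\{0\}$, which is genuinely a punctured disc once we fix $\arg\mu\in(0,2\pi/m)$ and continue), this causes no real trouble, but it should be remarked.
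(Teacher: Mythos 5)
Your proposal is correct and follows essentially the same route as the paper: the pole can only come from the residue term (the other two terms being analytic at $\mu=0$ by the uniform bound on non-touching bands and by Lemma \ref{dist}), its order is finite because $\lambda'_{s_0}$ vanishes only to order $m_i-1$ at the $k_i$ and $h^{-1}_{s_0,k_i}$ is analytic near $0$, and the factors $P_{s_0}(\cdot,r)$ are rank one, so the principal part is a finite sum of finite-rank operators. Your explicit bookkeeping of the pole order $q_i(m_i-1)=m-q_i$ and the Laurent-coefficient argument simply spell out details the paper leaves implicit.
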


\begin{proof}
We note that the only pole comes from the residue term at $\mu=0$.
The operator has a pole of finite order as $\lambda'_{s_0}$ only has simple
zeroes of order $m_i-1$ at the $k_i$ and  $h^{-1}_{s_0,k_i}$ is analytic in a
neighborhood of $0$. Moreover, the factors $P_{s_0}(h^{-1}_{s_0,k_i}(e^{2\pi ip/m_i}\mu^{q_i}),r) $ are of rank 1.
\end{proof}

\begin{prop} \label{triv}
Let $\mu\in \cO\setminus\{0\}$ such that $\arg \mu\in(0,\pi/m)$. Let
\beq\label{Bmutilde}\widetilde{B}_\mu= (\mu_1+\mu^m)\frac{\eps_1}{\eps_0}B_\mu.\eeq
 Then $(I-\widetilde{B}_\mu)v=0$ only has the trivial solution in $L^2_{\eps_0}((0,1)^2)$.
\end{prop}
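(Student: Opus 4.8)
The plan is to reduce the statement to the self-adjointness of the perturbed operator $L(k_x)$. The key observation is that for $\arg\mu\in(0,\pi/m)$ one has $\arg(\mu^m)=m\arg\mu\in(0,\pi)$, so $\la:=\mu_1+\mu^m$ has strictly positive imaginary part; in particular $\la$ lies in the resolvent set of the self-adjoint operator $L_0(k_x)$, and by the proposition identifying $B_\mu$ with the resolvent for $\arg\mu\in(0,2\pi/m)$ we have $B_\mu=(L_0(k_x)-\la)^{-1}$ on $L^2_{\eps_0}((0,1)^2)$. Hence, recalling \eqref{Bmutilde} and extending any $v\in L^2_{\eps_0}((0,1)^2)$ by zero to $\Omega$, the equation $(I-\widetilde{B}_\mu)v=0$ is exactly \eqref{eq:v} for this (now non-real) value of $\la$:
\beq
v=\la\,\frac{\eps_1}{\eps_0}\,(L_0(k_x)-\la)^{-1}v .
\eeq

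Next I would turn such a solution into an eigenfunction of $L(k_x)$, mimicking the passage from \eqref{eq:v2} to \eqref{FP}. Put $u:=\la(L_0(k_x)-\la)^{-1}v$; then $u$ belongs to the domain of $L_0(k_x)$, i.e.\ $u\in H^2(\Omega)$ and $u$ satisfies the quasi-periodic boundary conditions \eqref{quasiperiodic}, so $u$ lies in the domain of $L(k_x)$ as well (note that $L^2_\eps(\Omega)=L^2(\Omega)$ as sets, since $\eps$ is bounded above and away from zero). From the displayed equation, $\frac{\eps_1}{\eps_0}u=v$, so $u=\la(L_0(k_x)-\la)^{-1}\bigl(\tfrac{\eps_1}{\eps_0}u\bigr)$, which is \eqref{FP}; applying $(L_0(k_x)-\la)$ and multiplying by $\eps_0$ gives $-\Delta u=\la(\eps_0+\eps_1)u=\la\eps u$ on $\Omega$, i.e.\ $L(k_x)u=\la u$. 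Moreover $u\neq 0$ whenever $v\neq 0$, because $v=\tfrac{\eps_1}{\eps_0}u$.

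Finally I would invoke self-adjointness of $L(k_x)$ on $L^2_\eps(\Omega)$: its spectrum is real, so it has no eigenvalue at $\la=\mu_1+\mu^m$, which satisfies $\Im\la>0$. This contradiction shows $v=0$, which is the assertion. The whole argument is a short unwinding of definitions; the only points needing care — and I would state them explicitly — are that $B_\mu$ really is the genuine resolvent on the chosen sector (so that the self-adjoint spectral theory applies), that $u$ lies in the correct operator domain, and that a non-zero $v$ yields a non-zero $u$. I do not expect any substantial obstacle beyond these routine verifications.
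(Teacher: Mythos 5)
Your argument is correct and follows essentially the same route as the paper: identify $B_\mu$ with the genuine resolvent $(L_0(k_x)-\lambda)^{-1}$ at $\lambda=\mu_1+\mu^m$ (which has $\Im\lambda>0$ since $\arg\mu\in(0,\pi/m)$), turn a nontrivial solution of $(I-\widetilde{B}_\mu)v=0$ into a nonzero eigenfunction of $L(k_x)$ with eigenvalue $\lambda$, and contradict self-adjointness. The paper does exactly this, merely working with $w=(L_0(k_x)-\lambda)^{-1}v$ instead of your rescaled $u=\lambda w$, so no further comment is needed.
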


\begin{proof}
Suppose that there exists $v\neq 0$ such that $(I-\widetilde{B}_\mu)v=0$. Set $w:=\left(L_0(k_x) -\lambda\right)^{-1}v$ where $\lambda=\mu_1+\mu^m$. Then the equation for $v$ implies that $v=\lambda\frac{\eps_1}{\eps_0}w$, so $\left(-\frac{1}{\eps_0}\Delta -\lambda\right)w=v=\lambda\frac{\eps_1}{\eps_0}w$, or $-\Delta w = \lambda (\eps_0 + \eps_1) w$. But this implies that $\lambda=\mu_1+\mu^m$ is a non-real eigenvalue of the operator $L(k_x)$, yielding a contradiction.
 \end{proof}

\subsection{Main result}
To prove our result on the non-accumulation of eigenvalues, we make use of the following theorem on meromorphic Fredholm theory \cite[Theorem XIII.13]{RS}:

\begin{theorem}\label{Fredholm}
Let $D$ be a domain in $\C$, $S$ a discrete subset of $D$, and $H$ a Hilbert space. Assume we have a family of operators $\{A_z: z\in D\}$ such that
\begin{enumerate}
    \item\label{i} $z\mapsto A_z$ is analytic as a map from $D\setminus S$ to $\cL(H)$,
    \item\label{ii} $A_z$ is compact for $z\in D\setminus S$,
    \item\label{iii} $A_z$ has poles in $S$ of finite rank,
    \item\label{iv} there exists $z\in D\setminus S$ such that $(I-A_z)u=0$ has only the trivial solution.
\end{enumerate}
Then there exists at most a discrete set $\tilde{S}\subset D$ such that $(I-A_z)^{-1}\in\cL(H)$ for all  $z\in D\setminus(S\cup\tilde{S})$.
\end{theorem}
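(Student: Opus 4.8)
The plan is to deduce Theorem~\ref{Fredholm} from the \emph{analytic} Fredholm alternative (the case $S=\emptyset$, \cite[Theorem~VI.14]{RS}), the only real work being the behaviour near the poles.

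\textbf{Step 1: away from $S$.} Since $D$ is a domain and $S$ is discrete, $D\setminus S$ is a connected open subset of $\C$. On it $z\mapsto A_z$ is analytic with values in the compact operators, and by condition~\ref{iv} there is a point where $I-A_z$ is boundedly invertible. The analytic Fredholm theorem then produces a set $\tilde S_1\subseteq D\setminus S$, discrete \emph{in $D\setminus S$}, with $(I-A_z)\inv\in\cL(H)$ for every $z\in(D\setminus S)\setminus\tilde S_1$. The whole point is to check that $\tilde S_1$ cannot accumulate at a point of $S$.

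\textbf{Step 2: splitting off a finite-rank singularity at a pole.} Fix $z_0\in S$, choose $U$ with $U\cap S=\{z_0\}$, let $N$ be the pole order, and write the Laurent series $A_z=\sum_{j=1}^N(z-z_0)^{-j}F_j+H(z)$ with $F_j$ finite rank (condition~\ref{iii}) and $H$ holomorphic on $U$. Because $A_z-\sum_j(z-z_0)^{-j}F_j$ is compact for $z\neq z_0$ and $H(z_0)$ is its operator-norm limit, $H$ is a \emph{holomorphic compact-operator-valued} function; pick a finite-rank operator $Q$ with $\norm{H(z_0)-Q}<\tfrac12$ and an orthogonal finite-rank projection $P$ whose range contains $\Ran Q$ and all the $\Ran F_j$. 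Then $(I-P)A_z=(I-P)(H(z)-Q)$ has norm $<1$ on a smaller neighborhood $V$ of $z_0$, so $M_z:=I-(I-P)H(z)$ is invertible on $V$ with holomorphic inverse (Neumann series), and
\beq
I-A_z\;=\;M_z\bigl(I-M_z\inv PA_z\bigr)\qquad(z\in V\setminus\{z_0\}),
\eeq
which reduces the invertibility of $I-A_z$ to that of $I-K_z$ with $K_z:=M_z\inv PA_z$ of rank $\le\dim\Ran P$ and meromorphic on $V$.

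\textbf{Step 3: a meromorphic determinant, and patching.} A vector $u$ lies in $\ker(I-K_z)$ iff $u=M_z\inv\xi$ with $\xi\in\Ran P$ and $\xi=PA_zM_z\inv\xi$; hence $\ker(I-K_z)\cong\ker(I-\Psi_z)$, where $\Psi_z:=PA_zM_z\inv\big|_{\Ran P}$ acts on the \emph{fixed} finite-dimensional space $\Ran P$ and is meromorphic on $V$. As $I-K_z$ is $I$ minus a compact operator, it (and hence $I-A_z$) is boundedly invertible exactly when the meromorphic scalar $\det(I-\Psi_z)$ is nonzero. This determinant cannot vanish identically on $V$: that would force $I-A_z$ non-invertible on all of $V\setminus\{z_0\}\subseteq D\setminus S$, contradicting Step~1. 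Hence its zeros in $V$ are isolated, so the exceptional points of $I-A_z$ in $V$ form a set discrete in $V$, not accumulating at $z_0$. Carrying this out at each $z_0\in S$ and combining with Step~1, the set $\tilde S:=\{z\in D\setminus S:(I-A_z)\inv\notin\cL(H)\}$ has no accumulation point in $D$; it is the required discrete set.

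\textbf{Main obstacle.} The only delicate part is Steps 2--3, i.e.\ preventing the exceptional set from piling up at the poles. It hinges on two ideas: using compactness of the \emph{holomorphic} part $H$ to enlarge the finite-rank projection $P$ until the complementary piece is a strict contraction (so $M_z$ is invertible and everything collapses to a fixed finite-dimensional determinant), and invoking the analytic Fredholm theorem on $D\setminus S$ to exclude the degenerate case $\det(I-\Psi_z)\equiv0$. The remainder is routine bookkeeping.
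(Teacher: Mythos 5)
Your proposal is correct, but note that the paper itself contains no proof of Theorem \ref{Fredholm}: it is quoted verbatim from Reed--Simon \cite[Theorem XIII.13]{RS} and used as a black box, so there is no in-paper argument to compare against. What you wrote is essentially the standard proof of the meromorphic Fredholm theorem, and I checked the key points: the factorization $I-A_z=M_z(I-M_z\inv PA_z)$ is valid because $(I-P)F_j=0$ and $(I-P)Q=0$, so $(I-P)A_z=(I-P)H(z)=(I-P)(H(z)-Q)$ has norm $<1$ near $z_0$ and $M_z$ is invertible with holomorphic inverse across $z_0$; the kernel correspondence $\ker(I-K_z)\cong\ker(I-\Psi_z)$ together with the Fredholm alternative reduces invertibility to the scalar $\det(I-\Psi_z)$, which is meromorphic on $V$ with at worst a finite-order pole at $z_0$, so its zeros cannot accumulate at $z_0$ unless it vanishes identically; and the identically-zero case is excluded by the analytic Fredholm alternative on $D\setminus S$ (which is still open and connected, since removing a discrete closed set cannot disconnect a planar domain) together with hypothesis (4). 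Your Step 2 device of enlarging the projection $P$ until the complementary part is a strict contraction neatly sidesteps the issue, present in some textbook write-ups, of whether $I$ minus the holomorphic part is itself invertible near $z_0$. So the argument is complete and self-contained; in the context of the paper it simply supplies a proof the authors chose to import by citation.
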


We can therefore prove our main result of this section:
\begin{theorem} \label{mr}
The spectrum of the problem on the strip, i.e.~the spectrum of the operator $L(k_x)$ can not accumulate at the ends of bands of the spectrum of the operator $L_0(k_x)$.
\end{theorem}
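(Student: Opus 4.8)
The plan is to reduce the non-accumulation statement near the upper band edge $\mu_1$ to an application of the meromorphic Fredholm theorem (Theorem \ref{Fredholm}) applied to the operator family $\mu \mapsto \widetilde{B}_\mu$ on the Hilbert space $H = L^2_{\eps_0}((0,1)^2)$, with $D = \cO$ a small disc around $0$ and $S = \{0\}$. First I would invoke the preceding propositions: $\mu \mapsto B_\mu$ is analytic on $\cO\setminus\{0\}$, each $B_\mu$ is compact there, and $B_\mu$ has a pole of finite rank at $0$; multiplying by the bounded function $(\mu_1+\mu^m)\eps_1/\eps_0$ preserves all three properties, so hypotheses (\ref{i})--(\ref{iii}) hold for $\widetilde{B}_\mu$. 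Hypothesis (\ref{iv}) is exactly Proposition \ref{triv}, which furnishes a point $\mu$ with $\arg\mu\in(0,\pi/m)$ at which $I-\widetilde{B}_\mu$ is injective. Theorem \ref{Fredholm} then yields a discrete set $\widetilde{S}\subset\cO$ outside of which $(I-\widetilde{B}_\mu)^{-1}$ exists and is bounded.

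Next I would translate this back into a statement about $\sigma(L(k_x))$. For $\mu\in\cO\setminus\{0\}$ with $\arg\mu\in(0,2\pi/m)$ we have $B_\mu = (L_0(k_x)-\mu_1-\mu^m)^{-1}$ by the extension proposition, and $\lambda = \mu_1+\mu^m$ is an eigenvalue of $L(k_x)$ iff equation \eqref{eq:v} has a nontrivial solution $v$, i.e.\ iff $1$ is an eigenvalue of $\lambda\frac{\eps_1}{\eps_0}(L_0(k_x)-\lambda)^{-1}$, which for these $\mu$ is precisely $\widetilde{B}_\mu$. Hence on the sector $\arg\mu\in(0,2\pi/m)$, the eigenvalues of $L(k_x)$ of the form $\mu_1+\mu^m$ correspond exactly to the $\mu$ for which $I-\widetilde{B}_\mu$ is not invertible, and these lie in the discrete set $\widetilde{S}$. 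In particular they cannot accumulate at $\mu=0$, i.e.\ the eigenvalues of $L(k_x)$ cannot accumulate at $\mu_1$ from below. A real eigenvalue $\lambda\in(\mu_0,\mu_1)$ near $\mu_1$ corresponds, via $\lambda = \mu_1+\mu^m$ with $\lambda<\mu_1$, to $m$ choices of $\mu$, at least one of which lies in the open sector $\arg\mu\in(0,2\pi/m)$ (choosing $\mu^m = \lambda-\mu_1$ real negative, the $m$-th roots are spread by angle $2\pi/m$ so one has argument in $(0,2\pi/m)$ when $m\ge 2$; the case $m=1$ is handled directly since then $\mu = \lambda-\mu_1$ and the resolvent itself is analytic past the edge). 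Thus each real eigenvalue accumulation point of $L(k_x)$ at $\mu_1$ would force an accumulation of $\widetilde{S}$ at $0$, a contradiction.

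Finally I would remark that the entire construction was carried out for the lower edge of the band touching $\mu_1$, i.e.\ the top end of the gap; the symmetric construction at the top edge of the band whose bottom is $\mu_0$ (or, in the general situation, at any band edge of $L_0(k_x)$) is completely analogous, replacing $\lambda_{s_0}(k) = \mu_1 + (k-k_p)^{m_p}g_p(k)$ by the corresponding local expansion with a maximum, and choosing the contour on the other side. Since $\sigma_{\mathrm{ess}}(L(k_x)) = \sigma_{\mathrm{ess}}(L_0(k_x))$ and the spectral bands of $L_0(k_x)$ are finitely many on any bounded interval, non-accumulation at every band edge gives the full claim. I expect the main obstacle to be the bookkeeping in this last translation step: making sure that for a real $\lambda$ close to $\mu_1$ one can genuinely select a branch $\mu$ lying in the half-open sector where both the resolvent-extension identity ($\arg\mu\in(0,2\pi/m)$) and the injectivity input of Proposition \ref{triv} ($\arg\mu\in(0,\pi/m)$) are available, and that the discreteness of $\widetilde{S}$ near $0$ indeed precludes a sequence of such $\mu$'s converging to $0$ — the geometry of the $m$-valued map $\mu\mapsto\mu^m$ has to be tracked carefully, but it is routine once set up.
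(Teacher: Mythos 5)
Your proposal is correct and follows essentially the same route as the paper: apply the meromorphic Fredholm theorem to the family $\widetilde{B}_\mu$ on $\cO$ with $S=\{0\}$, using the analyticity, compactness and finite-rank-pole propositions together with Proposition \ref{triv} for hypothesis (4), and then translate discreteness of the singular set back into non-accumulation of eigenvalues of $L(k_x)$ at the band edge. Your extra care in selecting an $m$-th root $\mu$ of $\lambda-\mu_1$ lying in the sector where $B_\mu$ coincides with the resolvent is a detail the paper leaves implicit (and the $m=1$ caveat never arises since every $m_p$ is even), but it does not change the argument.
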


\begin{proof}
By \cite[Lemma 10]{AS}, the spectrum of the operator $L(k_x)$ outside the bands can only consist of eigenvalues. By \eqref{eq:v}, it is clear that $\lambda=\mu_1+\mu^m  \in\sigma_p(L(k_x))$ if and only if $I-\widetilde{B}_\mu$ is not invertible.

We apply Theorem \ref{Fredholm} to the operator family $\widetilde{B}_\mu$ from \eqref{Bmutilde} with $\mu$ in a small neighborhood $\cO$ of $0$.
The properties (1)-(3) follow from the propositions proved for $B_\mu$ in the previous sub-section which obviously carry over to $\widetilde{B}_\mu$, while (4) is shown in Proposition \ref{triv}.
\end{proof}
\subsection*{Acknowledgements} The authors would like to thank T. Dohnal, who showed them
a numerical example illustrating the results of section 4.

\end{document}